\tikzstyle{b_vertex}=[circle,fill=black!100,text=white,inner sep=0.8mm,draw]
\tikzstyle{w_vertex}=[circle,fill=white!100,text=white,inner sep=0.8mm,draw]
\tikzstyle{point}=[circle,fill=black,inner sep=0.1mm]
\tikzstyle{path_edge}=[thick]
\newtheorem{theorem}{Theorem}
\newtheorem{lemma}[theorem]{Lemma}
\newtheorem{corollary}{Corollary}
\title{Combinatorics and algorithms for augmenting graphs}
\author{Konrad K. Dabrowski\thanks{School of Engineering and  Computing Sciences, Durham University, Science Laboratories, South Road,
Durham DH1 3LE, UK. Email: konrad.dabrowski@durham.ac.uk}
\and Dominique de Werra\thanks{Mathematics Institute, \'Ecole Polytechnique F\'ed\'erale (EPFL), Switzerland. Email: dominique.dewerra@epfl.ch}
\and Vadim V. Lozin\thanks{DIMAP and Mathematics Institute, University of Warwick, Coventry, CV4 7AL, UK. Email: V.Lozin@warwick.ac.uk} 
\and Viktor Zamaraev\thanks{DIMAP and Mathematics Institute, University of Warwick, Coventry, CV4 7AL, UK. Email: V.Zamaraev@warwick.ac.uk}}
\begin{document}
\maketitle

\begin{abstract}
The notion of augmenting graphs generalizes Berge's idea of augmenting chains,  
which was used by Edmonds in his celebrated solution of the maximum matching problem. 
This problem is a special case of the more general maximum independent set (MIS) problem.
Recently, the augmenting graph approach has been successfully applied to solve MIS 
in various other special cases.  
However, our knowledge of augmenting graphs is still very limited, 
and we do not even know what the minimal {\it infinite} classes of augmenting graphs are.
In the present paper, we find an answer to this question and apply it to extend 
the area of polynomial-time solvability of the  maximum independent set problem. 
\end{abstract}

\noindent{\em Keywords:} independent set, augmenting graph, polynomial-time algorithm, graph class

\section{Introduction}
In a graph, an {\it independent set} is a subset of pairwise non-adjacent vertices. 
For an input graph $G$, the {\sc maximum independent set} (MIS) problem asks to find the maximum cardinality (denoted $\alpha(G)$) of an independent set in $G$.
This is one of the central problems of combinatorial 
optimization with numerous applications and various connections to other problems in the area.

Like many important computational problems, {\sc maximum independent set} is NP-hard in general. However, for graphs with some special properties, 
the problem can be solved in polynomial time. This is the case, for instance, for the class of line graphs. The {\it line graph}
of a graph $G$ is the graph whose vertices represent the edges of $G$ with two vertices being adjacent if and only if the respective 
edges of $G$ share a vertex. Therefore, finding a maximum independent set in the line graph of $G$ is equivalent to finding 
a maximum matching in $G$, i.e. a maximum subset of edges no two of which share a vertex. 
The latter problem, unlike {\sc maximum independent set}, can be solved in polynomial time and 
the first polynomial-time algorithm to find a maximum matching in a graph was proposed by Edmonds \cite{Edmon1965} in 1965.
Lov\'asz and Plummer observed in their book ``Matching Theory''~\cite{LP86} that Edmonds' solution is 
``among the most involved of combinatorial algorithms.''  

In his solution to the maximum matching problem Edmonds  implemented the idea of augmenting chains proposed by Berge \cite{Berge1957}.
Later, in 1980, the same idea was used by Minty~\cite{Minty1980} and Sbihi \cite{Sbihi1980}, independently, in order to 
extend the solution of Edmonds from line graphs to claw-free graphs. After that,
for nearly two decades, the idea of augmenting chains did not see any further development and the result for claw-free 
graphs remained unimproved. 

In 1999, Alekseev \cite{Ale99} obtained a breakthrough result extending polynomial-time solvability of MIS from
claw-free to fork-free graphs. The crucial importance of this result is not only due to the fact that it extends 
the area of polynomial-time solvability of the problem. It also extends the {\it technique}. It shows that in addition
to augmenting chains there are other types of augmenting graphs and develops algorithms for detecting these graphs. 
In the same year, Mosca \cite{Mosca1999} discovered one more type of augmenting graphs (simple augmenting trees) 
and applied it to solve the problem in the class of $(P_6,C_4)$-free graphs. 
Since then it has been understood that the idea of augmenting chains is just a (very) special case of a general approach 
to solve the {\sc maximum independent set} problem, now known as the {\it augmenting graph technique}. 

In the last 15 years, the augmenting graph approach was frequently applied to various graph classes to design polynomial-time
algorithms for the {\sc maximum independent set} problem, and many new types of augmenting graphs have been discovered in 
the literature (see \cite{HL2005} for a survey). However, our knowledge in this area is still very limited. 
We do not even know what the minimal {\it infinite} classes of augmenting graphs are
(note that finding augmenting graphs from a finite collection is computationally a trivial task).
In the present paper, we answer this question. Our result allows us to identify new classes of graphs
with polynomial-time solvable {\sc maximum independent set} problem that 
extend some of the previously known results, such as algorithms for 
claw-free graphs and $(P_k,K_{1,t})$-free graphs.

The organization of the paper is as follows. In the rest of this section, we introduce basic terminology and notation. 
In Section~\ref{sec:aug}, we briefly review the idea of augmenting graphs. Then in Section~\ref{sec:aug-rams}
we present our Ramsey-type result about minimal {\it infinite} classes of augmenting graphs.
In Section~\ref{sec:app} we use this result to develop a polynomial-time solution in the class 
of $(S_{1,1,3},K_{p,p})$-free graphs that extends the class of claw-free graphs for any $p\ge 3$.
Finally, in Section~\ref{sec:con} we conclude the paper with a number of open problems.

\medskip
Given a graph $G$, we let $V(G)$ and $E(G)$ denote the vertex set and the edge set of $G$, respectively. 
For a vertex $v\in V(G)$, we let $N(v)$ denote the {\em neighbourhood} of $v$, i.e. the
set of vertices adjacent to $v$,
and for a set $U\subseteq V(G)$ we define $N(U) = \bigcup_{u \in U}N(u)$.
If $X\subseteq V(G)$, then $N_X(v)=N(v) \cap X$ is the neighbourhood of $v$ restricted to the set $X$, and similarly $N_X(U) = \bigcup_{u \in U}N_X(u)$.
The graph $G[X]$ is the subgraph of $G$ {\em induced} by $X$, i.e.
the graph obtained from $G$ by deleting every vertex not in $X$. 
As usual, $P_k$ denotes the chordless path on $k$ vertices and $K_{n,m}$ denotes 
the complete bipartite graph with parts of size~$n$ and~$m$. 
Also, $S_{i,j,k}$ denotes the tree with exactly three vertices of degree 1, being at distance $i,j,k$ from
the only vertex of degree 3. The graph $S_{1,1,1}=K_{1,3}$ is frequently referred to  as the {\it claw}
and $S_{1,1,2}$ as the {\it fork}. 

A class of graphs is said to be {\it hereditary} if for every graph $G$ in the class,
every induced subgraph of $G$ is also in the class. It is well known that a class of graphs is hereditary 
if and only if it can be characterized in terms of forbidden induced subgraphs. 
More precisely, for a set $M$ of graphs, let $Free(M)$ denote the class of graphs
containing no induced subgraphs from $M$. A class $X$ is hereditary if and only 
if $X=Free(M)$ for some set $M$. If $G\in Free(M)$, we say that $G$ is $M$-free.

A {\em bipartite} graph is a graph whose vertex set can be partitioned into two independent
sets. We denote such a graph by $(W,B,E)$, where $W$ and $B$ are 
the respective independent sets and $E$ is the set of edges.


\section{Augmenting Graphs}
\label{sec:aug}

Let $G$ be a graph,  $S$ be an independent set in $G$ and $R=V(G)\setminus S$.  
We say that the vertices in $S$ are \emph{white} and the vertices in $R$ are
\emph{black}.  Consider two subsets $W \subseteq S$ and $B \subseteq R$. 
Note that $W$ is an independent set. If $B$ also is an independent set, $|B|>|W|$ and $N(B)\cap S
\subseteq W$, we say that the bipartite graph $H=G[W\cup B]$ is \emph{augmenting for the set} $S$. 

Clearly, if $G$ contains an augmenting graph $H=G[W\cup B]$ for $S$, then $S$ is not maximum, because 
$T:=(S \setminus W)\cup B$ is an independent set larger than $S$, in which case we say that 
$T$ is obtained from $S$ by {\em $H$-augmentation}. 
On the other hand, if $S$ is not maximum and $T$ is a larger independent set, 
then the bipartite subgraph of $G$ induced by $(T\setminus S)$ and $(S \setminus T)$
is augmenting for $S$. Thus we obtain the following well-known result.

\begin{theorem}[Augmenting Graph Theorem]
An independent set $S$ in a graph $G$ is maximum if and only if there are no
augmenting graphs for $S$.
\end{theorem}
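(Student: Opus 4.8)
The plan is to prove both directions of the biconditional directly from the definition of an augmenting graph, essentially unpacking the two observations already sketched in the paragraph preceding the statement.

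For the forward direction, I would prove the contrapositive: suppose $G$ contains an augmenting graph $H = G[W \cup B]$ for $S$, with $W \subseteq S$, $B \subseteq R = V(G) \setminus S$, $B$ independent, $|B| > |W|$, and $N(B) \cap S \subseteq W$. I would exhibit the larger independent set $T := (S \setminus W) \cup B$ explicitly. The cardinality inequality $|T| = |S| - |W| + |B| > |S|$ is immediate from $|B| > |W|$ (using $W \subseteq S$ so that $|S \setminus W| = |S| - |W|$). The only real content is checking that $T$ is independent. Here I would argue by cases on the two endpoints of a putative edge: an edge inside $S \setminus B$ contradicts independence of $S$; an edge inside $B$ contradicts independence of $B$; and an edge between a vertex $b \in B$ and a vertex $s \in S \setminus W$ would put $s \in N(B) \cap S$, hence $s \in W$ by hypothesis, contradicting $s \notin W$. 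So no such edge exists, $T$ is independent, $|T| > |S|$, and $S$ is not maximum.

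For the reverse direction, again I would prove the contrapositive: suppose $S$ is not maximum, so there is an independent set $T$ with $|T| > |S|$. I would set $W := S \setminus T$ and $B := T \setminus S$ and claim $H := G[W \cup B]$ is augmenting for $S$. Since $W \subseteq S$ it is independent; since $B \subseteq T$ it is independent; and $B \subseteq R$ because $B$ is disjoint from $S$. For the cardinality condition, $|T| > |S|$ together with $|T| = |T \cap S| + |T \setminus S|$ and $|S| = |T \cap S| + |S \setminus T|$ gives $|B| = |T \setminus S| > |S \setminus T| = |W|$. Finally I must verify $N(B) \cap S \subseteq W$: if $s \in S$ is adjacent to some $b \in B \subseteq T$, then $s \notin T$ (else $T$ has an edge), so $s \in S \setminus T = W$. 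That establishes all four defining conditions, so $H$ is augmenting for $S$.

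I do not expect any genuine obstacle here; the statement is standard and the proof is a bookkeeping exercise in set arithmetic and case analysis. The one place to be slightly careful is the bijective/counting step showing $|B| > |W|$ in the reverse direction and $|T| > |S|$ in the forward direction — one must remember that $B$ and $W$ live in complementary parts and that the common part $S \cap T$ cancels — but this is routine. I would keep the write-up short, essentially formalizing the two paragraphs of intuition already present in the text, since the paper itself flags the result as ``well-known.''
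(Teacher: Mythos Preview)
Your proposal is correct and follows exactly the approach the paper itself sketches in the paragraph immediately preceding the theorem (the paper does not give a separate formal proof, treating the result as well-known). One tiny slip: in the forward direction you wrote ``an edge inside $S \setminus B$'' where you meant $S \setminus W$; since $B \cap S = \emptyset$ this is harmless, but you should fix it in the write-up.
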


This theorem suggests the following general approach to find a maximum
independent set in a graph $G$: begin with any independent set $S$ in $G$ and
as long as $S$ admits an augmenting graph $H$, apply $H$-augmentations to $S$.
Clearly the problem of finding augmenting graphs is NP-hard in general, as the
maximum independent set problem is NP-hard. However, for graphs in some special 
classes this approach can lead to polynomial-time algorithms, which is the case 
for line graphs (the maximum matching problem), claw-free graphs ~\cite{Minty1980,Sbihi1980},
fork-free graphs \cite{Ale99} and many other classes (see \cite{HL2005} for a survey). 

To effectively apply this approach to a particular class of graphs, 
we first have to characterize the augmenting graphs in the class 
and then develop polynomial-time algorithms for detecting these graphs.

Obviously, if the list of augmenting graphs is finite, then all of them
can be detected in polynomial time. Therefore, only {\it infinite} families of
augmenting graphs are of interest. In Section~\ref{sec:aug-rams}, we show that,
with the restriction to hereditary classes, there are exactly three minimal
infinite families of augmenting graphs.

\section{Minimal infinite classes of augmenting graphs}
\label{sec:aug-rams}
According to Ramsey's theorem, every graph with sufficiently many vertices contains either a ``large'' independent set 
or a ``large'' clique. This result can also be interpreted as follows:
in the family of hereditary classes there are precisely two minimal infinite classes of 
graphs, the class of edgeless graphs and the class of complete graphs. Indeed, 
each of these two classes is infinite and any hereditary class excluding at least 
one edgeless graph and one complete graph is finite (since the number of vertices 
in graphs in this class is bounded by a Ramsey number).  
In the present section, we prove a result of the same flavour. 
To formally state the result, we need to update some terminology related to augmenting graphs.

\medskip

If $H$ is an augmenting graph  for an independent set $S$, then it may happen that
a proper induced subgraph of $H$ is also augmenting for the same set. For instance,   
if a star~$K_{1,p}$ with $p>2$ is augmenting for $S$, then any induced $K_{1,2}$ of this 
star is also augmenting for $S$. This observation motivates the notion of a {\it minimal
augmenting graph for} $S$, i.e. an augmenting graph containing no proper induced subgraph 
which is also augmenting for $S$. In \cite{LM2008}, it was proved that an augmenting graph $H=(W,B,E)$ is 
minimal for an independent set $S$ if and only if it possesses each of the following three properties:
\begin{enumerate}[(a)]
	\item \label{prop:irred-a} $|W| = |B| - 1$;
	\item \label{prop:irred-b} for every nonempty subset $A \subseteq W, |A| < |N(A)\cap B|$;
	\item \label{prop:irred-c} $H$ is connected.
\end{enumerate}
In what follows, we will call any bipartite graph $H=(W,B,E)$  satisfying Properties (\ref{prop:irred-a}), (\ref{prop:irred-b}) and~(\ref{prop:irred-c}) an {\it irreducible} graph,
without any reference to a specific independent set.  
Clearly, if an independent set $S$ admits an augmenting graph, then it also admits an augmenting graph which is irreducible.
Therefore, the universe of augmenting graphs can be restricted, without loss of generality, to irreducible ones. 
 
For an arbitrary set $\cal C$ of graphs, let ${\cal C}^i$ denote the set of irreducible graphs in $\cal C$, in which case 
we say that the set ${\cal C}^i$ is {\it generated}  by $\cal C$.
Our goal is to identify minimal infinite sets of irreducible graphs generated by {\it hereditary} classes. One such set is
\begin{itemize}
\item the set $\cal P$ of chordless paths of even length.
\end{itemize}
Clearly, each graph in this set 
is irreducible. Moreover, $\cal P$ coincides with the set of irreducible graphs in the class of claw-free graphs.
Indeed, by definition, every irreducible graph is bipartite, and any bipartite claw-free graph has maximum vertex degree at most 2
(otherwise a claw arises). In other words, a connected bipartite claw-free graph is either a path or an even cycle.
Neither paths of odd length nor even cycles are augmenting (as they have equally many black and white 
vertices). Therefore, the set of irreducible claw-free graphs coincides with $\cal P$.
Clearly, the set $\cal P$ is infinite. Moreover, it is a {\it minimal} infinite class generated by a hereditary class.
Indeed, let $X=Free(M)$ be a hereditary class defined by a set  $M$ of forbidden induced subgraphs. 
If $M$ does not contain any graph every connected component of which is a path, then $X$ contains all graphs from $\cal P$. 
Otherwise,~$X$ contains only finitely many graphs from $\cal P$.

Similarly, it is easy to check that
\begin{itemize}
\item the set $\cal K$ of complete bipartite graphs $K_{k,k+1}$ and
\item the set $\cal T$ of simple trees $T_k$, i.e. graphs formed from a star $K_{1,k}$
by subdividing each edge exactly once (see Figure~\ref{fig:agc} for an example)
\end{itemize}
are minimal infinite sets of irreducible graphs generated by hereditary classes. Below we show that $\cal P, K, T$ are 
the {\it only} sets of irreducible graphs with this property.
To prove our result, we need the following lemma.

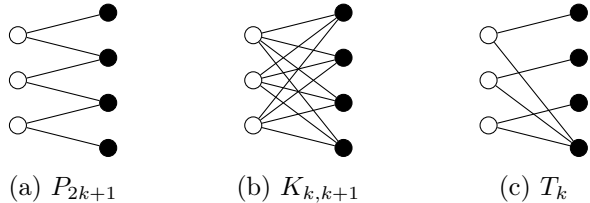
\begin{figure}[ht!]
\begin{center}
\begin{subfigure}{0.2\textwidth}
\centering
\begin{tikzpicture}[scale=.6,auto=left]
\node[w_vertex] (b) at (0,0.5) { };
\node[w_vertex] (d) at (0,1.5) { };
\node[w_vertex] (f) at (0,2.5) { };
\node[b_vertex] (a) at (2,0) { };
\node[b_vertex] (c) at (2,1) { };
\node[b_vertex] (e) at (2,2) { };
\node[b_vertex] (g) at (2,3) { };
\foreach \from/\to in {a/b,b/c,c/d,d/e,e/f,f/g}
 \draw (\from) -- (\to);
\end{tikzpicture}
\subcaption{$P_{2k+1}$}
\end{subfigure}
\begin{subfigure}{0.2\textwidth}
\centering
\begin{tikzpicture}[scale=.6,auto=left]
\node[w_vertex] (b) at (0,0.5) { };
\node[w_vertex] (d) at (0,1.5) { };
\node[w_vertex] (f) at (0,2.5) { };
\node[b_vertex] (a) at (2,0) { };
\node[b_vertex] (c) at (2,1) { };
\node[b_vertex] (e) at (2,2) { };
\node[b_vertex] (g) at (2,3) { };
\foreach \from in {a,c,e,g}
 \foreach \to in {b,d,f}
  \draw (\from) -- (\to);
\end{tikzpicture}
\subcaption{$K_{k,k+1}$}
\end{subfigure}
\begin{subfigure}{0.2\textwidth}
\centering
\begin{tikzpicture}[scale=.6,auto=left]
\node[w_vertex] (b) at (0,0.5) { };
\node[w_vertex] (d) at (0,1.5) { };
\node[w_vertex] (f) at (0,2.5) { };
\node[b_vertex] (a) at (2,0) { };
\node[b_vertex] (c) at (2,1) { };
\node[b_vertex] (e) at (2,2) { };
\node[b_vertex] (g) at (2,3) { };
\foreach \from/\to in {a/b,a/d,a/f,b/c,d/e,f/g}
 \draw (\from) -- (\to);
\end{tikzpicture}
\subcaption{$T_k$}
\end{subfigure}
\caption{The three special families of augmenting graphs.}\label{fig:agc}
\end{center}
\end{figure}

\begin{lemma}[\cite{DDL2013}]\label{lem:Ramsey}
For any natural numbers $t$ and $p$, there is a number $N(t,p)$ such that every
bipartite graph with a matching of size at least $N(t,p)$ contains either a
bi-clique $K_{t,t}$ or an induced matching on $p$ edges.
\end{lemma}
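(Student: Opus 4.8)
The plan is to reduce the statement to the multicolour version of Ramsey's theorem, applied to the edges of the given matching. Let $H=(W,B,E)$ be a bipartite graph and let $M=\{a_1b_1,\dots,a_mb_m\}$ be a matching with $a_i\in W$ and $b_i\in B$, where $m$ is as large as we need. First I would colour each pair $\{i,j\}$ with $i<j$ by the ordered pair $\big(\,[a_ib_j\in E],\,[a_jb_i\in E]\,\big)\in\{0,1\}^2$, where $[\cdot]$ is the indicator of the bracketed statement; since $H$ is bipartite, $a_ib_j$ and $a_jb_i$ are the only edges that can join an endpoint of $e_i=a_ib_i$ to an endpoint of $e_j=a_jb_j$, so this is a well-defined $4$-colouring of the edges of the complete graph on $\{1,\dots,m\}$. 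Setting $n:=\max(p,2t)$ and letting $N(t,p)$ be the corresponding $4$-colour Ramsey number (the least $N$ such that every $4$-edge-colouring of $K_N$ contains a monochromatic $K_n$), a matching of size $m\ge N(t,p)$ yields an index set $I$ with $|I|=n$ on which the colouring is constant; after relabelling we may assume $I=\{1,\dots,n\}$ with the order inherited from $\{1,\dots,m\}$.

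Next I would split into cases according to the common colour on $I$. If the colour is $(0,0)$, then no edge joins $\{a_i,b_i\}$ to $\{a_j,b_j\}$ for any $i<j$ in $I$, so $\{a_ib_i:i\in I\}$ is an induced matching with $n\ge p$ edges. If the colour is $(1,1)$, then $a_ib_j\in E$ for every ordered pair $i\ne j$ from $I$, and together with the matching edges this shows that $H[\{a_i:i\in I\}\cup\{b_i:i\in I\}]$ is a complete bipartite graph $K_{n,n}\supseteq K_{t,t}$. The two remaining colours are symmetric, so assume the colour is $(1,0)$, i.e.\ $a_ib_j\in E$ and $a_jb_i\notin E$ for all $i<j$ in $I$; then $a_i$ is adjacent to $b_j$ precisely when $i\le j$. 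Choosing $A'=\{a_1,\dots,a_t\}$ and $B'=\{b_t,b_{t+1},\dots,b_{2t-1}\}$ — which is possible because $n\ge 2t$ — every index occurring in $A'$ is at most $t$ and every index occurring in $B'$ is at least $t$, so all $t^2$ cross pairs are edges and $H[A'\cup B']$ is a $K_{t,t}$. In every case $H$ contains either a $K_{t,t}$ or an induced matching on $p$ edges, as required.

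The only step that calls for a moment of thought is the last case: a monochromatic colour-$(1,0)$ (or colour-$(0,1)$) class does not directly produce either target structure, but it forces a half-graph among the matching endpoints, and any half-graph on parameter at least $2t-1$ already contains an induced $K_{t,t}$. Everything else is a routine application of Ramsey's theorem, and the crude choice $N(t,p)=R_4(\max(p,2t))$ suffices; an explicit bound could be read off from the usual inductive proof of Ramsey's theorem if one wanted it.
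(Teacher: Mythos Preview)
The paper does not prove Lemma~\ref{lem:Ramsey}; it is quoted from~\cite{DDL2013}. Your argument is correct and is essentially the standard proof of that result: a $4$-colour Ramsey argument on the pairs of matching edges (recording which of the two possible cross-edges are present), followed by a case split in which colours $(0,0)$ and $(1,1)$ yield the induced matching and the biclique directly, and the asymmetric colours force a half-graph from which a $K_{t,t}$ is read off. The bound $N(t,p)=R_4(\max(p,2t))$ you obtain is exactly of the shape one expects.
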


\begin{theorem}\label{th:MinClasses}
Let ${\cal C}$ be a hereditary class of graphs and 
let ${\cal C}^i$ be the set of irreducible graphs generated by ${\cal C}$. 
If ${\cal C}^i$ is infinite, then it contains at least one of $\cal P$, $\cal K$ or $\cal T$.
\end{theorem}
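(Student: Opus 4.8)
The plan is to start from an infinite set $\mathcal{C}^i$ of irreducible bipartite graphs and extract, via a sequence of Ramsey-type pigeonhole arguments, an infinite subfamily witnessing one of $\mathcal{P}$, $\mathcal{K}$ or $\mathcal{T}$. Since $\mathcal{C}$ is hereditary, it suffices to find, inside the graphs of $\mathcal{C}^i$, arbitrarily large \emph{induced} copies of $P_{2k+1}$, of $K_{k,k+1}$, or of $T_k$; hereditariness then guarantees all such induced subgraphs lie in $\mathcal{C}$, and irreducibility of these particular graphs (already checked in the excerpt) places them in $\mathcal{C}^i$. So the real content is: \emph{every irreducible graph on sufficiently many vertices contains one of $P_{2k+1}$, $K_{k,k+1}$, $T_k$ as an induced subgraph}, with the bound depending only on $k$. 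Given that, if $\mathcal{C}^i$ is infinite it contains graphs of unbounded order, hence for every $k$ some member contains one of the three target graphs of parameter $k$ as an induced subgraph; by an infinite-pigeonhole argument one of the three types occurs for infinitely many $k$, and then that whole family $\mathcal{P}$, $\mathcal{K}$ or $\mathcal{T}$ is contained in $\mathcal{C}^i$.

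To prove the key statement, let $H=(W,B,E)$ be irreducible with $|W|=|B|-1$ and $|V(H)|$ large. First I would observe that Property~(\ref{prop:irred-b}) (Hall-type expansion) implies, via König/Hall, that $H$ has a matching saturating $W$; so $H$ has a matching of size $|W|$, which is large. Now apply Lemma~\ref{lem:Ramsey} with $t=2$: a large matching forces either an induced matching on many edges or an induced $K_{2,2}=C_4$. I would handle these two regimes separately. If there is a large induced matching $M$, I want to ``grow'' it using connectivity (Property~(\ref{prop:irred-c})): repeatedly take a shortest path in $H$ between two components of the current induced structure; a minimal such path is chordless, and iterating/pruning yields either a long induced path (giving $P_{2k+1}$, after parity bookkeeping using $|W|=|B|-1$ to force the correct endpoints/colours — an even-length path) or a chordless path meeting the induced matching in a way that produces a spider $S_{1,1,k}$-like configuration, which after further cleanup becomes an induced $T_k$ or an induced $P_{2k+1}$.

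The other regime is where $C_4$'s (bicliques $K_{2,2}$) are abundant; here I would iterate Lemma~\ref{lem:Ramsey} with growing $t$, or argue directly: either we find an induced $K_{t,t}$, from which Property~(\ref{prop:irred-b}) lets us peel off a vertex on the larger side to get an induced $K_{t,t-1}$ and then reindex to $K_{k,k+1}$ (again parity/size from~(\ref{prop:irred-a})), or the bicliques are ``spread out'' and connectivity again glues a chordless path through them, routing us back to the $P$ or $T$ case. Throughout, the constraints~(\ref{prop:irred-a})--(\ref{prop:irred-c}) are used to convert ``large non-induced structure'' into the specific induced members of $\mathcal{P},\mathcal{K},\mathcal{T}$, which all have $|B|=|W|+1$, are connected, and satisfy the expansion property.

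The main obstacle I anticipate is the case analysis in the ``induced matching'' regime: controlling how edges outside a large induced matching attach to its endpoints, so that a minimal connecting path is genuinely chordless \emph{and} interacts with the matching in exactly one of the two tame patterns (extending a path versus creating a degree-3 branch vertex of a subdivided star). Getting a clean induction here — each step either lengthens a path/spider or strictly simplifies the attachment structure, with Ramsey bounds that stay a function of $k$ alone — is the delicate part; the biclique regime and the final infinite-pigeonhole packaging are comparatively routine.
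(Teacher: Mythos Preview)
Your overall packaging (hereditariness reduces the task to: large irreducible graphs contain one of $P_{2k+1}$, $K_{k,k+1}$, $T_k$ as an induced subgraph; then pigeonhole) is fine, and the use of Hall to get a perfect matching of $W$ into $B$ is exactly what the paper does. But the heart of your plan---the ``induced matching regime''---has a real gap. From a large induced matching together with connectivity you propose to glue shortest paths between matching components and hope that the result cleans up to either a long induced path or an induced $T_k$. Producing an induced $P_{2k+1}$ this way is plausible, but producing an induced $T_k$ is not: a $T_k$ is a subdivided star, so it needs a \emph{single} vertex adjacent to $k$ of the white endpoints of the matching simultaneously. Shortest paths between pairs of matching edges give you objects like $S_{1,1,j}$, not a common centre, and no amount of ``pruning'' of a path-plus-matching picture manufactures a degree-$k$ vertex that was not already there. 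You correctly flag this as the delicate step, but it is not merely delicate; as stated, the mechanism is missing.

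The paper's proof avoids this difficulty by reversing the order of the argument. It first disposes of paths: assuming no $P_t$ lies in $\mathcal{C}^i$, every graph in $\mathcal{C}^i$ is connected and $P_t$-free, so graphs of large order must contain a vertex $x$ of degree at least $N(t,t)+2$. \emph{Now} take the Hall matching and restrict it to the neighbours of $x$ (and their matched partners). This gives a bipartite graph on $N(x)$ and its matches carrying a matching of size $N(t,t)$; Lemma~\ref{lem:Ramsey} then yields either an induced $K_{t,t}$ (hence $K_{t-1,t}\in\mathcal{C}^i$) or an induced matching on $t$ edges whose white endpoints are all adjacent to $x$, so together with $x$ they induce $T_t$. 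The point is that the high-degree vertex is located \emph{before} you look for the induced matching, and it automatically serves as the centre of the simple tree. Your route, which tries to assemble the centre after the fact from connectivity, does not supply this vertex.
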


\begin{proof}
Suppose the theorem is false, i.e. ${\cal C}^i$ is infinite, but there is a $t$ such that 
${\cal C}^i$ does not contain any $P_t$, $K_{t-1,t}$ or $T_t$.  The graphs in 
${\cal C}^i$ are connected, but are $P_t$-free, so there must be graphs in
${\cal C}^i$ with vertices of arbitrarily large degree, in particular, of degree at least $N(t,t)+2$.

Consider a graph $G = (W,B,E)$ in ${\cal C}^i$. 
By Property~(\ref{prop:irred-b}) of irreducible graphs, for any subset $W'$ of $W$, we must have $|W'| \leq |N_B(W') \cap B|$ and
therefore, by Hall's Marriage Theorem, there must be
a matching $M$ from $W$ to $B$ (one vertex of $B$ remain unmatched to any
vertex of $W$ since $|B|=|W|+1$).

Now let $G = (W,B,E)$ be any graph in ${\cal C}^i$ containing a vertex $x$ of
degree at least $N(t,t)+2$.  Let $X$ be the
set of vertices in the neighbourhood of $x$ which form part of the matching
$M$, but are not matched with $x$. $X$ must contain at least $N(t,t)$ vertices.
Let~$Y$ be the set of vertices which $M$ matches to the vertices of $X$. Then
$G[X\cup Y]$ contains a matching of size $N(t,t)$, but it is $K_{t-1,t}$-free and therefore $K_{t,t}$-free. This
implies, by Lemma~\ref{lem:Ramsey}, that it must contain an induced matching on $t$ edges. Let $Z$
be the set of vertices that occur in this induced matching. Then
$G[Z\cup\{x\}]$ forms a $T_t$, so $T_t \in {\cal C}$ and therefore $T_t \in
{\cal C}^i$. This contradiction completes the proof.
\end{proof}

This theorem implies that for any $t$ the class of $(P_t,K_{t,t},T_t)$-free graphs 
contains only finitely many irreducible graphs. Therefore:

\begin{corollary}\label{cor:main}
For positive integers $i,j,k$, the {\sc maximum independent set} problem can be
solved in the class of $(P_i,K_{j,j},T_k)$-free graphs in polynomial time.
\end{corollary}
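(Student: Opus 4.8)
The plan is to combine Theorem~\ref{th:MinClasses} with the iterative augmentation scheme of Section~\ref{sec:aug}. First I would set $t=\max\{i,\,j+1,\,k\}$ and observe that the class $\mathcal{C}:=Free(P_i,K_{j,j},T_k)$ contains no induced $P_t$, no induced $K_{t-1,t}$ and no induced $T_t$: indeed $P_t$ contains an induced $P_i$ (as $t\ge i$), $K_{t-1,t}$ contains an induced $K_{j,j}$ (as $t-1\ge j$), and $T_t$ contains an induced $T_k$ (as $t\ge k$). A fortiori the same holds for the subclass $\mathcal{C}^i\subseteq\mathcal{C}$. Hence, by the contrapositive of Theorem~\ref{th:MinClasses}, $\mathcal{C}^i$ is finite; let $N_0$ be the maximum number of vertices of a graph in $\mathcal{C}^i$, a constant depending only on $i,j,k$.

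Next I would spell out the algorithm. Put $n:=|V(G)|$, start with $S=\emptyset$ (any independent set will do), and repeat: search for an augmenting graph for $S$; if one, say $H=(W,B,E)$, is found, replace $S$ by the larger independent set $(S\setminus W)\cup B$ obtained by $H$-augmentation; if none is found, stop and return $S$. By the Augmenting Graph Theorem the returned set is maximum, and since every augmentation strictly increases $|S|$, which never exceeds $n$, the loop executes at most $n$ times.

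The only point that requires care is that the search step runs in polynomial time, and here I would use the two observations recalled in Section~\ref{sec:aug-rams}. If $S$ admits an augmenting graph at all, then it admits one that is irreducible, say $H$; since $\mathcal{C}$ is hereditary and $H=G[W\cup B]$ is an induced subgraph of $G$, we have $H\in\mathcal{C}$, hence $H\in\mathcal{C}^i$, hence $|V(H)|\le N_0$. Therefore it suffices to enumerate all vertex subsets $U\subseteq V(G)$ with $|U|\le N_0$ --- there are $O(n^{N_0})$ of these --- and for each to test, in polynomial time, whether $G[U]$ is augmenting for $S$, i.e. whether $W:=U\cap S$ and $B:=U\setminus S$ satisfy that $B$ is independent, $|B|>|W|$, and $N(B)\cap S\subseteq W$ (the last condition being checked in the whole graph $G$). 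If some $U$ passes the test we have our augmenting graph; if none does, then $S$ admits no irreducible augmenting graph and hence no augmenting graph at all, so $S$ is maximum.

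Combining these bounds gives a total running time of order $n^{N_0+1}$ times a polynomial factor, which is polynomial for each fixed triple $(i,j,k)$. I do not anticipate a genuine obstacle: the substantive content is Theorem~\ref{th:MinClasses}, and all the corollary adds is the reduction from an arbitrary augmenting graph to an irreducible one of bounded size, after which the brute-force-plus-iterate argument is routine. One should note, though, that this argument leaves $N_0$ --- and hence the degree of the polynomial --- unspecified as a function of $i,j,k$.
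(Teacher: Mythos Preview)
Your proposal is correct and follows essentially the same approach as the paper: the paper derives the corollary in one line from Theorem~\ref{th:MinClasses} (finitely many irreducible graphs in the class, hence brute-force search plus iterated augmentation is polynomial), and you have simply spelled this out in full detail, including the explicit reduction to an irreducible augmenting graph of bounded size and the running-time analysis. The only cosmetic point is that the set $\mathcal{P}$ consists of paths of \emph{even length}, so strictly speaking your $P_t$ lies in $\mathcal{P}$ only when $t$ is odd; but since $\mathcal{C}$ is hereditary it also excludes $P_{t+1}$, so this does not affect the argument.
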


This result generalizes the polynomial-time 
solvability of the problem  in the class of $(P_k,K_{1,t})$-free graphs 
proved in \cite{LR2003}. Also, it was recently shown in \cite{new} that 
the problem can be solved in polynomial time in a subclass of $(P_i,K_{j,j},T_k)$-free graphs
defined by two additional forbidden induced subgraphs. Corollary~\ref{cor:main} also generalizes 
this result.  

\section{Independent sets in $(S_{1,1,3},K_{p,p})$-free graphs}
\label{sec:app}

In this section, we solve the {\sc maximum independent set} problem
in polynomial time for $(S_{1,1,3},K_{p,p})$-free graphs. Observe that for $p>2$ this class contains 
all claw-free graphs. Therefore, our result generalizes the solution for claw-free 
graphs and hence the solution of the {\sc maximum matching} problem. 

We first describe the structure of irreducible graphs in our class (Section~\ref{sec:str})
and then show how to find these graphs in polynomial time (Section~\ref{sec:alg}).

\subsection{The structure of augmenting $(S_{1,1,3},K_{p,p})$-free graphs}
\label{sec:str}

According to Theorem~\ref{th:MinClasses}, there are only finitely many $(S_{1,1,3},K_{p,p})$-free graphs that are irreducible and contain neither long induced paths nor large induced simple trees.
Therefore, in this section we restrict ourselves to describing the irreducible graphs
containing either a long induced path (Lemma~\ref{lem:AugGraphWithLongPath}) or a large induced simple tree (Lemma~\ref{lem:MinAugGraphWithAp+2}).
We start with the structure of $S_{1,1,3}$-free bipartite graphs containing a long induced path.

\begin{lemma}\label{lem:AugGraphWithLongPath}
	Let $H=(W,B,E)$ be a connected $S_{1,1,3}$-free bipartite graph
	containing a $P_8$ as an induced subgraph. Then $H$ is either 
	a chordless path or a chordless cycle.
\end{lemma}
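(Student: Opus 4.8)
The plan is to take the induced $P_8$, label its vertices $v_1 v_2 \cdots v_8$ with $v_1,v_3,v_5,v_7$ in one part and $v_2,v_4,v_6,v_8$ in the other, and show that every vertex of $H$ outside this path has degree at most $2$, and moreover that the path cannot branch. Once we know $H$ has maximum degree $2$ and is connected, it is immediately a path or a cycle, so the whole argument reduces to a local analysis forbidding $S_{1,1,3}$ (the spider with legs of lengths $1,1,3$). The key observation I would use repeatedly is that an $S_{1,1,3}$ appears as soon as we find a vertex $u$ with two private neighbours $a,b$ (nonadjacent to each other and to the rest) together with an induced path of length $3$ starting at $u$ and internally disjoint from $\{a,b\}$; conversely, inside a long induced path there are always induced $P_4$'s available to serve as the long leg.

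First I would show the path itself does not have a ``chord-like'' branch: suppose some vertex $x \notin \{v_1,\dots,v_8\}$ is adjacent to an internal vertex $v_i$ of the path. Because $H$ is bipartite, $x$ lies in the opposite part to $v_i$, so $x$ is nonadjacent to $v_{i-1}$ and $v_{i+1}$. Now I would case on how far along the path we sit: since $P_8$ is long, there is always a sub-path of $4$ consecutive vertices $v_j v_{j+1} v_{j+2} v_{j+3}$ with $v_i$ as one endpoint (say $v_i = v_j$, using $i \le 5$; the mirror case $i \ge 4$ is symmetric). Then $v_i$ has the two nonadjacent neighbours $v_{i-1}$ and $x$ — provided $x$ is nonadjacent to $v_{i-1}$, which holds — and the path $v_i v_{i+1} v_{i+2} v_{i+3}$ as a third branch, yielding an induced $S_{1,1,3}$ centred at $v_i$ unless $x$ has some edge to $\{v_{i-1}, v_{i+1}, v_{i+2}, v_{i+3}\}$. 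Bipartiteness kills the edges to $v_{i-1}$ and $v_{i+1}$; edges $xv_{i+2}$ or $xv_{i+3}$ create short cycles that I would rule out (or that force $x$ onto the path), and in the remaining configurations one still extracts an $S_{1,1,3}$ by sliding the long leg along the path. The upshot is that no external vertex attaches to an internal vertex of the path.

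Next I would handle the endpoints. If $x \notin \{v_1,\dots,v_8\}$ is adjacent to $v_1$, then $v_1$ has neighbours $v_2$ and $x$; these are nonadjacent (bipartite), and they are each nonadjacent to $v_4$ and beyond unless $x$ creeps back toward the path. I would argue $x$ can be adjacent to at most — in fact, must be adjacent to none of — $v_3, v_4, \dots$: an edge $xv_3$ would make $\{v_1, x, v_3, v_2\}$ together with $v_2 v_3$… here I would instead observe that if $x$ attaches only at $v_1$ then $\{x, v_1, v_2\}\cup\{v_2 v_3 v_4 v_5\}$ exhibits $v_2$ as the centre of an $S_{1,1,3}$ with legs $x$–nonsense; more cleanly, $v_2$ then has neighbours $v_1$ and $v_3$, $v_1$ has the pendant $x$, so $x v_1 v_2$ is one long leg while $v_3 v_4 v_5$ extends the other side, and together with a pendant at $v_2$… I will organize this so that the correct centre is $v_2$: it has nonadjacent neighbours $v_1, v_3$, plus the path $v_3 v_4 v_5$ is not the right shape. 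Let me restate: the clean statement is that $v_2$ together with $x$ (attached at $v_1$) and the tail $v_1, v_3, v_4$ forces an $S_{1,1,3}$ centred at $v_1$ with the two short legs $x$ and $v_2$ and the long leg $v_2 v_3 v_4$ — but $v_2$ cannot be both a short leg and the start of the long leg, so instead the long leg is taken from the far end: $v_1$ has short legs $x$ and $v_2$, and we need a length-$3$ path from $v_1$ avoiding $v_2$, which does not exist. Hence the real claim is that a pendant at $v_1$ forces us to look at $v_2$: $v_2$ has neighbours $v_1, v_3$; attach the pendant-path $v_1 x$? No. The correct and simplest route, which I would write up, is: consider $v_3$. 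It has neighbours $v_2$ and $v_4$. The path $v_4 v_5 v_6$ is a length-$3$ leg. If additionally $v_2$ had a private pendant we would be done, but it need not. So the genuinely forced configuration is: $v_2$ must have another neighbour or else, combined with the pendant $x$ at $v_1$, the set $\{x, v_1, v_2, v_3, v_4\}$ is an induced $P_5$ which extends to $P_8$ using $v_5, v_6, v_7$, so $H$ contains an induced $P_8$ shifted by one — fine, but to get a spider I need a branch. The honest summary: a pendant $x$ at $v_1$ together with the edge $v_1v_2$ and the induced path $v_2v_3v_4v_5$ makes $v_2$ adjacent to the three ends $v_1$(+$x$), $v_3$; this is not yet a spider.

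So the main obstacle, as the above fumbling indicates, is producing the $S_{1,1,3}$ in the endpoint/branching cases with the centre and the three legs correctly identified, since a bipartite long path offers many induced $P_4$'s but the spider needs its centre to have degree $\ge 3$ — i.e. a genuine branch vertex. The clean strategy I would actually commit to: (i) show $H$ has no vertex of degree $\ge 3$, by supposing $y$ has three neighbours $a,b,c$ (pairwise nonadjacent, bipartite), noting $H$ is connected and $P_8$-rich so $\{a,b,c,y\}$ — which is already a claw $K_{1,3}=S_{1,1,1}$ — must be extendable; then use that one of the branches at $y$ can be grown into a path of length $3$ inside/along the $P_8$ (because at most one branch can ``double back,'' and the $P_8$ gives enough room away from $y$), turning the claw into an $S_{1,1,3}$; the remaining edges among the extended vertices are forbidden by bipartiteness or create cycles contradicting $P_8$-inducedness after a short case check; (ii) conclude that, being connected with $\Delta(H)\le 2$ and containing $P_8$, the graph $H$ is a chordless path or a chordless cycle. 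Thus the single hard step is step (i) — locating, in each branching scenario, three internally disjoint legs of lengths $1,1,3$ with no extra edges — and that is where I expect to spend the bulk of the casework, using the ambient induced $P_8$ as the reservoir from which the long leg is drawn.
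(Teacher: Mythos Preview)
Your write-up is not yet a proof: you explicitly flag the endpoint case as ``fumbling'' and then defer the main step of your fallback plan (growing a length-$3$ leg from a degree-$3$ vertex) to future casework. The concrete gap is that a pendant $x$ attached only at $v_1$ does \emph{not} create an $S_{1,1,3}$ from the $P_8$ alone --- as you discovered, no vertex of $\{x,v_1,\dots,v_8\}$ has degree $3$ in that subgraph --- so neither of your two strategies, as written, closes the argument.

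The trick you are missing (and which the paper uses) is to take $P$ to be an induced path of \emph{maximum} length in $H$, not merely the given $P_8$. Maximality immediately kills the endpoint difficulty: if an external vertex $x$ were adjacent to exactly one endpoint of $P$ and to no internal vertex, then $P$ could be extended, contradicting maximality. Hence any external neighbour of $P$ must see both endpoints. The internal-vertex analysis you sketched (cases ``$x$ misses one internal black vertex and hits the next'' versus ``$x$ hits three consecutive internal black vertices'') then goes through cleanly, because with $|P|\ge 8$ each part contains at least three internal vertices, so the required six-vertex $S_{1,1,3}$ configurations are always available. Once you know every external neighbour of $P$ is adjacent to exactly the two endpoints, a second external vertex $y$ (whether adjacent to $x$ or not) yields an $S_{1,1,3}$ along the cycle $P\cup\{x\}$, so $H=P$ or $H=P\cup\{x\}$.

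Your alternative plan (prove $\Delta(H)\le 2$ directly) could in principle succeed, but the sentence ``one of the branches at $y$ can be grown into a path of length $3$ inside/along the $P_8$'' is exactly the hard part and you have not argued it; the degree-$3$ vertex $y$ need not lie on or near the $P_8$, and the grown leg must avoid the other two neighbours of $y$. Switching to a maximal path makes all of this unnecessary.
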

\begin{proof}
Assume that $H$ is not a chordless path. We will show that $H$ is a chordless cycle. 
	
	Let $P$ be an induced path of maximum length in $H$. Since $P$ has
	at least eight vertices, each of the parts of $H$ contains at least three
	\textit{internal vertices} of $P$, i.e. vertices different from the endpoints of $P$. Let $x$ denote a vertex of $H$ outside
	$P$, which has a neighbour in $P$. Assume without loss of generality
	that $x$ belongs to $W$. We claim that $x$ has no neighbours
	among the internal vertices of $P$.
	Suppose for a contradiction that $x$ is adjacent to an internal vertex of $P$.
	Then at least one of the following two cases takes place:
	\begin{enumerate}
		\item There are two consecutive internal vertices
		$b_1,b_2 \in B$ of $P$ 	such that $(x,b_1) \notin E$ and $(x,b_2) \in E$. Then $x,b_1,b_2$ and
		the three vertices of $P$ adjacent to $b_1$ or to $b_2$ induce
		an $S_{1,1,3}$.
		
		\begin{figure}[ht!]
	      		\centering
\begin{tikzpicture}
	  		[scale=.6,auto=left]
			\node[w_vertex] (w1) at (-1,2) { }; 	  		
	  		
			\node[w_vertex] (b1) at (0,0) { }; 
			\coordinate [label=center:\footnotesize{$b_1$}] (b_1) at (0,-0.5);
			
			\node[w_vertex] (w2) at (1,2) { }; 	 
			
			\node[w_vertex] (b2) at (2,0) { };
			\coordinate [label=center:\footnotesize{$b_2$}] (b_2) at (2,-0.5);
			
			\node[w_vertex] (w3) at (3,2) { }; 	 
			
			\node[w_vertex] (x) at (5,2) { };
			\coordinate [label=center:\footnotesize{$x$}] (x_) at (5,2.45);

			\foreach \from/\to in {w1/b1,b1/w2,w2/b2,b2/w3}
	    		\draw[thick] (\from) -- (\to);
	    		
	    		\foreach \from/\to in {x/b2}
	    		\draw (\from) -- (\to);		
\end{tikzpicture}
			\label{two_black_fig}
	       \end{figure}
		
		\item There are three consecutive internal vertices
		$b_1,b_2,b_3 \in B$
		of $P$ such that $x$ is adjacent to all of them. Then $x,b_1,b_3$,
		the two neighbours of $b_1$ in $P$ and any neighbour of $b_3$ in $P$
		induce an $S_{1,1,3}$.
		
		\begin{figure}[ht!]
	      		\centering
\begin{tikzpicture}
	  		[scale=.6,auto=left]
			\node[w_vertex] (w1) at (-1,2) { }; 	  		
	  		
			\node[w_vertex] (b1) at (0,0) { }; 
			\coordinate [label=center:\footnotesize{$b_1$}] (b_1) at (0,-0.5);
			
			\node[w_vertex] (w2) at (1,2) { }; 	 
			
			\node[w_vertex] (b2) at (2,0) { };
			\coordinate [label=center:\footnotesize{$b_2$}] (b_2) at (2,-0.5);
			
			\node[w_vertex] (w3) at (3,2) { }; 	 
			
			\node[w_vertex] (b3) at (4,0) { };
			\coordinate [label=center:\footnotesize{$b_3$}] (b_3) at (4,-0.5);
			
			\node[w_vertex] (w4) at (5,2) { }; 	 
			
			\node[w_vertex] (x) at (7,2) { };
			\coordinate [label=center:\footnotesize{$x$}] (x_) at (7,2.45);

			\foreach \from/\to in {w1/b1,b1/w2,w2/b2,b2/w3,w3/b3,b3/w4}
	    		\draw[thick] (\from) -- (\to);
	    		
	    		\foreach \from/\to in {x/b1,x/b2,x/b3}
	    		\draw (\from) -- (\to);		
		\end{tikzpicture}
			\label{three_black_fig}
	       \end{figure}
	\end{enumerate}
	
	This contradiction and the maximality of $P$ imply that $x$ has exactly two
	neighbours in~$P$, namely the first and the last vertex of the path.
	Finally, the graph $H$ does not contain any other vertices, since otherwise $H$
	would contain a vertex $y$ outside of $P$ distinct from~$x$ such that:
	\begin{itemize}
		\item either $y$ is not adjacent to $x$ and has exactly two neighbours 
		in $P$, which are the end-vertices of the path,
		\item or $y$ is adjacent to $x$ and has no neighbours in $P$.
	\end{itemize}
	It is easy to see that in both cases an induced $S_{1,1,3}$ would arise.
\end{proof}

\medskip
Next, we describe the structure of $S_{1,1,3}$-free bipartite graphs containing a large induced simple tree, i.e. a graph of the form $T_k$ 
(see Figure~\ref{fig:agc}). Suppose that a bipartite $S_{1,1,3}$-free graph contains an induced copy of $T_k$ with $k\ge 3$
and let $T$ be such a copy which is maximal with respect to inclusion. We define the following
\begin{itemize}
\item $u$ the central vertex of $T$, 
\item $A_0 = \{a_1, \ldots, a_k\}$ the set of neighbours of $u$ in $T$, 
\item $B_0 = \{b_1, \ldots, b_k\}$ the set of leaves of $T$ with $a_ib_i \in E$ for $i=1,\ldots,k$,
\item $B_1 = N(B_0) \setminus A_0$,
\item $B_1' \subseteq B_1$ the set of vertices not in $A_0$ with exactly one neighbour in $B_0$, 
\item $B_1'' \subseteq B_1$ the set of vertices which are adjacent to all the vertices of $B_0$, 
\item $A_1=N(A_0) \setminus (\{u\} \cup B_0)$,
\item $C = N(u) \setminus (A_0 \cup B_1)$,
\item $D_1=N(A_1) \setminus (C \cup A_0 \cup B_1)$,
\item $D_2=N(B_1) \setminus (\{u\} \cup B_0 \cup A_1)$.
\end{itemize}
	
	\begin{figure}[ht!]
		\centering
\begin{tikzpicture}
	  		[scale=.6,auto=left]
			\node[w_vertex] (u) at (0,0) { }; 
			\coordinate [label=center:\footnotesize{$u$}] (u_) at (0,-0.45);
			
			\draw (0,4) ellipse [x radius = 2, y radius = 1];
			\coordinate [label=center:$C$] (C) at (0,5.5);
			
			\draw (u) -- (-1.96,3.8);	
			\draw (u) -- (1.96,3.8);	

			\node[w_vertex] (b1) at (3,4) { };
			\coordinate [label=center:\footnotesize{$a_1$}] (a_1) at (3.1,4.4);
			\node[w_vertex] (b2) at (4,4) { };
			\coordinate [label=center:\footnotesize{$a_2$}] (a_2) at (4.1,4.4);
			\node[point] (px1) at (4.66, 4) {};
			\node[point] (px2) at (5, 4) {};
                     \node[point] (px3) at (5.33, 4) {};
			\node[w_vertex] (bs) at (6,4) { };
			\coordinate [label=center:\footnotesize{$a_k$}] (a_s) at (6.1,4.4);
			\draw [gray, decorate, decoration={brace, amplitude=8}] (2.7,4.7)  --  (6.3,4.7);
			\coordinate [label=center:$A_0$] (A0) at (4.5,5.5);
			\node[w_vertex] (a1) at (3,0)   { };
			\coordinate [label=center:\footnotesize{$b_1$}] (b_1) at (3.1,-0.45);
			\node[w_vertex] (a2) at (4,0)   { };
			\coordinate [label=center:\footnotesize{$b_2$}] (b_2) at (4.1,-0.45);
			\node[point] (py1) at (4.66, 0) {};
			\node[point] (py2) at (5, 0) {};
                                \node[point] (py3) at (5.33, 0) {};
			\node[w_vertex] (as) at (6,0)   { };
			\coordinate [label=center:\footnotesize{$b_k$}] (b_s) at (6.1,-0.45);
			\draw [gray, decorate, decoration={brace, amplitude=8}] (6.3,-0.7)  --  (2.7,-0.7);
			\coordinate [label=center:$B_0$] (B0) at (4.5,-1.5);

			\foreach \from/\to in {b1/a1,b2/a2,bs/as}
	    		\draw (\from) -- (\to);
	    		
	    		\foreach \from/\to in {u/b1,u/b2,u/bs}
	    		\draw (\from) -- (\to);

			\draw (9,4) ellipse [x radius = 2.1, y radius = 1.1]; 
			\coordinate [label=center:$B_1$] (B1) at (9,5.5);
			
				\draw (8,4) ellipse [x radius = 0.95, y radius = 0.5]; 
				\coordinate [label=center:$B_1'$] (B1') at (8,4);
				
				\draw (10,4) ellipse [x radius = 0.95, y radius = 0.5]; 
				\coordinate [label=center:$B_1''$] (B1'') at (10,4);
			
			\draw (9,0) ellipse [x radius = 2, y radius = 1];
			\coordinate [label=center:$A_1$] (A1) at (9,-1.5);				

			\draw (14,0) ellipse [x radius = 2, y radius = 1]; 
			\coordinate [label=center:$D_2$] (D2) at (14,-1.5);
			
			\draw (14,4) ellipse [x radius = 2, y radius = 1]; 
			\coordinate [label=center:$D_1$] (D1) at (14,5.5);
			
		\end{tikzpicture}
 		\caption{A bipartite graph containing an induced copy of $T_k$ centred at vertex $u$.}
 		\label{Ap}
	\end{figure}
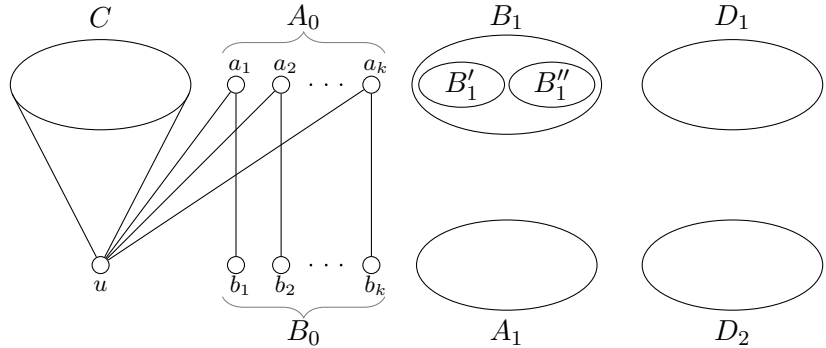

\begin{lemma}\label{lem:Ap}
If $T$ is an inclusionwise maximal induced copy of $T_k$ with $k\ge 3$ in a bipartite $S_{1,1,3}$-free graph, then the following statements hold:
\begin{enumerate}[(i)]
\item \label{stat:struct1} Every vertex of $B_1$ is adjacent to $u$.
\item \label{stat:struct2} Every vertex of $A_1$ is adjacent to every vertex of $A_0$. 
\item \label{stat:struct3} No vertex of $C$ has a neighbour outside of $\{u\} \cup A_1$.
\item \label{stat:struct4} No vertex of $D_1$ has a neighbour outside of $A_1$.
\item \label{stat:struct5} $B_1=B_1'\cup B_1''$. 
\item \label{stat:struct6} $B_1'=\emptyset$ or $B_1''=\emptyset$.
\item \label{stat:struct7} No vertex of $B_1'$ has a neighbour outside of $\{u\} \cup B_0  \cup A_1$.
\item \label{stat:struct8} No vertex of $D_2$ has a neighbour outside of $B_1$.
\end{enumerate}
\end{lemma}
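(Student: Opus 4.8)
The plan is to prove each of the eight statements by exhibiting, for every hypothetical ``forbidden'' configuration, an induced copy of $S_{1,1,3}$ or a violation of the maximality of $T$. The common strategy: a claimed adjacency (or non-adjacency) together with the rigid skeleton of $T$ — the centre $u$, its neighbours $a_1,\dots,a_k$, and the leaves $b_1,\dots,b_k$ — forces a vertex of degree~$3$ with three pendant paths of lengths $1,1,3$ in the bipartite (hence triangle-free) graph $H$. Throughout I will use $k\ge 3$, so there are always at least three ``private'' legs $a_ib_i$ available, and I may relabel indices freely.

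First I would dispose of the statements in a logical order so that later ones can use earlier ones. For (\ref{stat:struct1}): let $w\in B_1$, so $w$ has a neighbour $b_i\in B_0$ and $w\notin A_0$; since $H$ is bipartite and $b_i\in B$, $w$ lies in the same part as $u$, and if $wu\notin E$ then $u,a_i,b_i,w$ plus two further legs $a_j,a_\ell$ with $j,\ell\ne i$ form an $S_{1,1,3}$ centred at $u$ (legs $ua_j$, $ua_\ell$, path $u\!-\!a_i\!-\!b_i\!-\!w$) — contradiction, so $wu\in E$. For (\ref{stat:struct2}): take $v\in A_1$ adjacent to some $a_i$; if $v$ is non-adjacent to some $a_j$, then $u$ has legs to $a_j$ and to some third neighbour $a_\ell$ and the path $u\!-\!a_i\!-\!v$ extends to length~$3$? — here I must be careful that $v$ is non-adjacent to $u$ (true, since $v\notin C\cup A_0\cup B_1$ and in particular $v\notin N(u)$ up to the definition, or else argue via $b_i$), so $v,a_i,u,a_j,a_\ell,b_i$ give the $S_{1,1,3}$; I will have to check the exact definition of $A_1$ to see which vertices of $N(A_0)$ are excluded and argue non-adjacency to $u$ carefully, possibly using (\ref{stat:struct1}). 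For (\ref{stat:struct3}) and (\ref{stat:struct4}): a vertex $c\in C$ with a neighbour $z\notin\{u\}\cup A_1$, or a vertex $d\in D_1$ with a neighbour outside $A_1$, again yields a long leg attached to the $T$-skeleton (the branch $u\!-\!c\!-\!z$ or $a_i\!-\!d\!-\!z$ together with two of the pendant legs at $u$) producing $S_{1,1,3}$; the exclusions built into the definitions of $C$, $D_1$ are exactly what is needed to place $z$ ``far'' from everything.

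For the statements about $B_1$: (\ref{stat:struct5}) says every $w\in B_1$ has either exactly one neighbour in $B_0$ or is complete to $B_0$. Suppose $w$ has neighbours in $B_0$ but is neither of these types; then there exist $b_i\in N(w)$, $b_j\notin N(w)$, $b_\ell\in N(w)$ with distinct $i,j,\ell$, and I would try to realise an $S_{1,1,3}$ using $w$ as a degree-$3$ vertex with legs $w\!-\!b_i$, $w\!-\!b_\ell$ and the path $w\!-\!b_i\!-\!a_i\!-\!u$? — that path shares $b_i$ with a leg, so instead the cleaner route is: $w$ adjacent to $b_i,b_\ell$ and not to $b_j$; consider the vertex $u$, which has the leg $u\!-\!a_j\!-\!b_j$ (length~$2$, not $3$). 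A better attempt is to centre at $b_i$: neighbours $a_i$, $w$, and... $b_i$ has only those two neighbours in the skeleton, so I will likely centre at $w$ after using (\ref{stat:struct1}) to know $wu\in E$: then $w$ is adjacent to $u$, $b_i$, $b_\ell$, giving legs $w\!-\!b_i$, $w\!-\!b_\ell$ and path $w\!-\!u\!-\!a_j\!-\!b_j$, which is an $S_{1,1,3}$ since $b_j\notin N(w)$ and $a_j\notin N(w)$ (the latter because $a_j\in A_0$ and $w\notin$... needs a small check, or pick $a_j$ with $a_jw\notin E$ which is possible by maximality of $T$). For (\ref{stat:struct6}): if both $B_1'$ and $B_1''$ are nonempty, take $w'\in B_1'$ with unique neighbour $b_i$ and $w''\in B_1''$ complete to $B_0$; using $wu\in E$ for both and comparing their behaviour on $B_0$ produces an $S_{1,1,3}$ (likely centred at $w''$ or at $u$) — this is the step I expect to be the main obstacle, because it is the only one where two ``new'' vertices outside $T$ interact and one must also rule out the adjacency $w'w''$ and split into subcases accordingly. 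Finally (\ref{stat:struct7}) and (\ref{stat:struct8}) parallel (\ref{stat:struct3})–(\ref{stat:struct4}): a neighbour of $w\in B_1'$ outside $\{u\}\cup B_0\cup A_1$, or of $d\in D_2$ outside $B_1$, attaches a fresh leg to the configuration and, using (\ref{stat:struct1}), (\ref{stat:struct5}), (\ref{stat:struct6}) to control where things can attach, forces $S_{1,1,3}$; here I would also invoke maximality of $T$ in the case where the new vertex could instead be used to enlarge the simple tree. In each item the routine work is just drawing the three legs and verifying non-edges, all of which follow from bipartiteness plus the definitions; I will present one case in full detail and indicate that the remaining cases are analogous.
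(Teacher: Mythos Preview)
Your overall plan matches the paper's: each item is disposed of by exhibiting either an induced $S_{1,1,3}$ or a larger induced simple tree. However, there is a genuine gap in your treatment of (\ref{stat:struct3}). You propose to obtain an $S_{1,1,3}$ from ``the branch $u\!-\!c\!-\!z$ together with two of the pendant legs at $u$'', but this yields only $S_{1,1,2}$: the path $u\!-\!c\!-\!z$ has length~$2$, and because $c\notin B_1$ and $z\notin A_1\cup B_0$, neither $c$ nor $z$ is adjacent to any vertex of $A_0\cup B_0$, so no fourth vertex can extend this leg. In fact $S_{1,1,3}$-freeness alone does not settle this case; the paper instead observes that $\{u,c,z\}\cup A_0\cup B_0$ induces a copy of $T_{k+1}$, contradicting the \emph{maximality} of $T$. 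You mention invoking maximality only in connection with (\ref{stat:struct7}) and (\ref{stat:struct8}), but (\ref{stat:struct3}) is precisely where it is indispensable.

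Two smaller points. For (\ref{stat:struct2}) your proposed set $\{v,a_i,u,a_j,a_\ell,b_i\}$ does not induce an $S_{1,1,3}$ (every candidate long leg has length at most~$2$); the working configuration, as in the paper, is centred at the \emph{neighbour} $a_j$ of $v$, namely $\{a_j,b_j,v,u,a_i,b_i\}$ with legs $a_j\!-\!b_j$, $a_j\!-\!v$ and path $a_j\!-\!u\!-\!a_i\!-\!b_i$. For (\ref{stat:struct6}), which you flag as the ``main obstacle'' requiring a case split on the adjacency of $w'$ and $w''$, note that both lie in $B_1$ and hence in the same colour class of the bipartition, so $w'w''\notin E$ automatically; the paper handles this item in one line with the set $\{b_i,a_i,w',w'',b_j,a_j\}$ centred at $b_i$.
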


\begin{proof}
To prove Statement~(\ref{stat:struct1}), suppose a vertex $y\in B_1$ is non-adjacent to $u$. 
Let $b_k$ be a neighbour of $y$ in $B_0$, and $a_i, a_j, a_k$ be three distinct vertices from $A_0$.
Then the vertex set $\{ u, a_i, a_j, a_k, b_k, y \}$ induces an $S_{1,1,3}$. 

To prove Statement~(\ref{stat:struct2}), suppose that $x \in A_1$ has a non-neighbour in
$A_0$, say $a_i$. By definition, $x$ has a neighbour in $A_0$, say $a_j$.
But then the vertex set $\{ a_j, b_j, x, u, a_i, b_i \}$ induces an $S_{1,1,3}$. 

To prove Statement~(\ref{stat:struct3}), suppose a vertex $x \in C$  has a neighbour 
$y \notin \{u\} \cup A_1$. Then the set
$\{u,x,y\} \cup A_0 \cup B_0$ induces a copy of a simple tree which properly contains $T$,
contradicting the maximality of $T$. 

To prove Statement~(\ref{stat:struct4}), suppose a vertex $y \in D_1$ has a neighbour 
$z \notin A_1$ and let $x$ be a neighbour of $y$ in $A_1$.
Observe that $y$ is not adjacent to $u$, since otherwise $y$ would belong to $C$.
But then by Statement~(\ref{stat:struct2}), the vertex set $\{ a_1, b_1, u, x, y, z \}$ induces an $S_{1,1,3}$.

To prove Statement~(\ref{stat:struct5}), suppose a vertex $x$ in $B_1$  has at least 
two neighbours, say~$b_i$ and~$b_j$, and at least one non-neighbour, say $b_k$,
in $B_0$. Then by Statement~(\ref{stat:struct1}), the vertex set $\{ x, b_i, b_j, u, a_k, b_k \}$ induces an~$S_{1,1,3}$.

To prove Statement~(\ref{stat:struct6}), suppose that each of $B_1'$ and $B_1''$ contains at least
one vertex, say $x \in B_1'$ and $y \in B_1''$. 
Then the vertex set $\{ b_i, a_i, x, y, b_j, a_j \}$ induces an $S_{1,1,3}$, 
where $b_i$ is the neighbour of $x$ in $B_0$ and
$b_j$ is any vertex of $B_0$ different from $b_i$.

To prove Statement~(\ref{stat:struct7}),  suppose a vertex $x \in B_1'$  has a neighbour
$y \notin \{u\} \cup B_0  \cup A_1$. Then by Statement~(\ref{stat:struct1}), the
vertex set $\{ x, b_i, y, u, a_j, b_j\}$ induces an $S_{1,1,3}$, 
where $b_i \in B_0$ is a neighbour and $b_j \in B_0$ is a non-neighbour 
of $x$, respectively.
		
To prove Statement~(\ref{stat:struct8}), suppose a vertex $y \in D_2$  has a neighbour 
$z \notin B_1$. Then by Statements~(\ref{stat:struct1}) and~(\ref{stat:struct3}), the vertex set $\{ u, a_1, a_2, x, y, z \}$
induces an $S_{1,1,3}$, where $x \in B_1$ is a neighbour of $y$.
\end{proof}
Note that if the graph in the above lemma is connected, it follows that every vertex of the graph belongs to $\{u\} \cup A_0 \cup A_1 \cup B_0 \cup B_1 \cup C \cup D_0 \cup D_1$.

\medskip
From now on, we deal with bipartite graphs that are irreducible,
i.e. we assume that their vertices are coloured black and white and that they satisfy Properties (\ref{prop:irred-a}), (\ref{prop:irred-b}) and~(\ref{prop:irred-c}) of irreducible graphs.
Our goal is to prove that if $H=(W,B,E)$ is an irreducible $(S_{1,1,3},K_{p,p})$-free graph containing an induced copy of $T_k$ with $k\ge p+2$, 
then $H$ differs from a simple tree only by finitely many vertices. 
To prove this result, we first show in the next lemma that we can always assume that an induced copy of 
$T_k$ with $k\ge p+2$ appears in $H$ with its central vertex being black.

\begin{lemma}\label{Ap+2_subgraph}
Let $p \in \mathbb{N}$ and $H=(W,B,E)$ be an irreducible $(S_{1,1,3},K_{p,p})$-free graph.
If $H$ contains an induced copy of the graph $T_{p+2}$, then it contains an induced copy of 
$T_{p+2}$ in which the central vertex is black. 
\end{lemma}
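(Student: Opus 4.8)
The plan is to argue that if every induced copy of $T_{p+2}$ in $H$ has a white central vertex, then we can find a $K_{p,p}$, contradicting the hypothesis. So suppose $H$ contains an induced copy $T$ of $T_{p+2}$ with central vertex $u\in W$, and assume for contradiction that $H$ has no induced $T_{p+2}$ with a black central vertex. Write $A_0=\{a_1,\dots,a_{p+2}\}\subseteq B$ for the neighbours of $u$ and $B_0=\{b_1,\dots,b_{p+2}\}\subseteq W$ for the leaves, with $a_ib_i\in E$. I would first observe that, since $H$ is irreducible, Property~(\ref{prop:irred-b}) applied to $B_0$ (a subset of $W$) gives $|N(B_0)\cap B|\ge |B_0|+1=p+3$; in particular, the vertices $b_i$ must have further black neighbours beyond the $a_i$, or there must be extra black vertices seen collectively by $B_0$. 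The idea is to use these extra neighbours, together with the structural constraints from Lemma~\ref{lem:Ap} (which applies after taking an inclusionwise maximal $T_k\supseteq T$), to build a bi-clique.

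The key steps, in order, would be: (1) replace $T$ by an inclusionwise maximal induced simple tree $T_k$ ($k\ge p+2$) containing it, still with white centre $u$, and import the notation and Statements~(\ref{stat:struct1})--(\ref{stat:struct8}) of Lemma~\ref{lem:Ap}; (2) use the irreducibility inequality on subsets of $B_0$ to locate a black vertex set that is highly connected to $B_0$ — concretely, by Statement~(\ref{stat:struct5}) every vertex of $B_1$ lies in $B_1'$ or $B_1''$, and by Statement~(\ref{stat:struct6}) only one of these is nonempty, so either each $b_i$ has a private extra neighbour in $B_1'$, or there is a vertex in $B_1''$ adjacent to all of $B_0$; (3) in the $B_1''$ case, combine such a vertex $w$ with $u$: both are black (here $u\in W$... wait, $u$ is white) — more carefully, one checks that $\{u\}\cup B_1''$ on one side and $B_0$ on the other would have to stay small or else an $S_{1,1,3}$ appears, using Statement~(\ref{stat:struct1}); (4) the real leverage: if there were no black-centred $T_{p+2}$, then for \emph{each} leaf $b_i$, the vertex $b_i$ together with its neighbourhood cannot be the centre of a large simple tree, which forces $N(b_i)\cap B$ to be "spread out" only in restricted ways; pushing this for all $i$ simultaneously and invoking Hall/irreducibility forces many of the $b_i$ to share neighbours, and the $S_{1,1,3}$-freeness then forces those shared neighbourhoods to be complete to a large common set — yielding $K_{p,p}$.

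I would then finish by exhibiting the bi-clique explicitly: taking $p$ of the leaves $b_{i_1},\dots,b_{i_p}$ and $p$ common black neighbours $c_1,\dots,c_p$ produced by the previous step, the set $\{b_{i_1},\dots,b_{i_p}\}\cup\{c_1,\dots,c_p\}$ induces $K_{p,p}$ (any missing edge $b_{i_s}c_t$ would, together with $u$ and two further $a$-$b$ pairs, create an induced $S_{1,1,3}$, by an argument parallel to the proofs of Statements~(\ref{stat:struct5}) and~(\ref{stat:struct7})). This contradicts $K_{p,p}$-freeness, so $H$ must contain an induced $T_{p+2}$ with black centre.

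The main obstacle I anticipate is step~(4): turning "no black-centred $T_{p+2}$" into a statement that forces enough coincidences among the neighbourhoods $N(b_i)\cap B$ to extract a bi-clique of the \emph{right} size $p$ (rather than merely $2$ or $3$). The delicate point is that irreducibility (via Hall) only guarantees one \emph{extra} black vertex for the whole set $B_0$, not one per leaf, so the counting has to be done by considering subsets of $B_0$ of all sizes and bootstrapping; and one must be careful that the simple-tree one builds around a black $b_i$ is genuinely induced, i.e. that the candidate leaves are pairwise non-adjacent and non-adjacent to the wrong centres, which is exactly where $S_{1,1,3}$-freeness and Lemma~\ref{lem:Ap} must be used repeatedly. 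Getting the bookkeeping clean — and in particular handling the $B_1'$ versus $B_1''$ dichotomy uniformly — will be the bulk of the work.
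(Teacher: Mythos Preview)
Your overall framing (assume no black-centred $T_{p+2}$, invoke Lemma~\ref{lem:Ap} on a maximal $T_k$, and derive a $K_{p,p}$) matches the paper, but the argument goes astray at the point where you start hunting for the bi-clique. Two concrete gaps:

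\medskip
\textbf{(1) The $B_1''$ case is not a case.} A vertex $v\in B_1''$ is black (since $u$ is white) and adjacent to every $b_i\in B_0$; together with the $a_i$'s it is the centre of an induced $T_k$ with black centre. So the standing assumption forces $B_1''=\emptyset$ immediately. Your step~(3) treats $B_1''\neq\emptyset$ as something to analyse, and you visibly get tangled (``both are black \ldots wait, $u$ is white'').

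\medskip
\textbf{(2) The bi-clique is in the wrong place.} Once $B_1''=\emptyset$, every vertex of $B_1=B_1'$ has exactly \emph{one} neighbour in $B_0$. Hence distinct leaves $b_i$ have \emph{no} common black neighbours outside $A_0$, and your proposed bi-clique $\{b_{i_1},\dots,b_{i_p}\}\cup\{c_1,\dots,c_p\}$ cannot exist; the ``bootstrapping'' in step~(4) has nothing to bootstrap from. The paper instead looks at $A_0\cup A_1$. With $B_1''=\emptyset$ (hence $D_2=\emptyset$ by Statement~(\ref{stat:struct7})), Statements~(\ref{stat:struct3}), (\ref{stat:struct4}), (\ref{stat:struct7}) give the full vertex partition $W=\{u\}\cup B_0\cup A_1$ and $B=C\cup A_0\cup B_1'\cup D_1$, so Property~(\ref{prop:irred-a}) yields
\[
|A_1|=|C|+|B_1'|+|D_1|-2\ \ge\ |B_1'|-2.
\]
Applying Property~(\ref{prop:irred-b}) to each singleton $\{b_i\}\subseteq W$ gives each $b_i$ a neighbour in $B_1'$, and since vertices of $B_1'$ are private to a single $b_i$ we get $|B_1'|\ge k\ge p+2$, hence $|A_1|\ge p$. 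Now Statement~(\ref{stat:struct2}) says $A_1$ is complete to $A_0$, producing $K_{|A_1|,|A_0|}\supseteq K_{p,p}$ --- the desired contradiction. The whole thing is a short count; no Hall-type argument over subsets of $B_0$ is needed.
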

\begin{proof}
Let $T$ be an inclusionwise maximal induced copy of $T_k$ with $k\ge p+2$.
If the centre $u$ of $T$ is black, then we are done. So assume that $u$, as well as the centre of any other induced $T_{p+2}$, is white.
This assumption implies, by the definition of $B_1''$, that $B_1''$ is empty, since otherwise any vertex of $B_1''$ is a black centre of an induced $T_{p+2}$. 
Now from Statements (\ref{stat:struct3}), (\ref{stat:struct4}), (\ref{stat:struct5}) and~(\ref{stat:struct7}) of Lemma~\ref{lem:Ap} we conclude that  
$W = \{u\} \cup B_0 \cup A_1$ and $B = C \cup A_0 \cup B_1' \cup D_1$. Therefore, by Property~(\ref{prop:irred-a}) of irreducible graphs 
	\begin{equation}\label{eq:A1_cardinality}
		|A_1| = |C| + |B_1'| + |D_1| - 2 \geq |B_1'| - 2.
	\end{equation}
From the definition of $B_1'$ and Property~(\ref{prop:irred-b}) of irreducible graphs
	it follows that $|B_1'| \geq p+2$. This together with Equation~(\ref{eq:A1_cardinality})
	and Statement~(\ref{stat:struct2}) of Lemma~\ref{lem:Ap} implies that $A_0 \cup A_1$ induces a subgraph containing $K_{p,p}$.
	This is a contradiction. Therefore $H$ contains an induced copy of $T_{p+2}$ in which the central vertex is black. 
\end{proof}

We will now show that the structure of every irreducible $(S_{1,1,3},K_{p,p})$-free graph containing a large induced copy of $T_{k}$
is very close to the structure of a simple tree. More formally, we will say that a graph $H$ is an $s$-{\it extension} of 
a simple tree if it can be reduced to a simple tree by deleting at most $s$ vertices.

\begin{lemma}\label{lem:MinAugGraphWithAp+2}
	Let $p \in \mathbb{N}$ and $H=(W,B,E)$ be an irreducible $(S_{1,1,3},K_{p,p})$-free graph containing $T_{p+2}$ as an 
	induced subgraph. Then $H$ is a $4p$-extension of a simple tree $T_k$ with $k\ge p+2$ and in which the central vertex is black. 
\end{lemma}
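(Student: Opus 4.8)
The plan is to combine Lemma~\ref{Ap+2_subgraph}, which lets us assume the central vertex $u$ of a maximal induced $T_k$ ($k\ge p+2$) is black, with the structural description of Lemma~\ref{lem:Ap}, and then bound the sizes of all the "extra" vertex sets $C$, $A_1$, $B_1$, $D_1$, $D_2$ by $O(p)$. Once all these sets are shown to be small, deleting them leaves exactly the simple tree $T\cong T_k$ with black centre, which is the desired conclusion; tracking constants should give the bound $4p$.

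First I would invoke Lemma~\ref{Ap+2_subgraph} to fix a maximal induced $T_k$ with $k\ge p+2$ and black centre $u$, so that $A_0\subseteq W$, $B_0\subseteq B$, and by Lemma~\ref{lem:Ap}(\ref{stat:struct1}) every vertex of $B_1$ is white (being adjacent to $u$). Since the centre is black, the relevant "bad" configuration is now the white set $B_1''$ (vertices adjacent to all of $B_0$): if $|B_1''|\ge p$ then, together with $p$ vertices of $B_0$, we would get a $K_{p,p}$, so $|B_1''|\le p-1$. By Lemma~\ref{lem:Ap}(\ref{stat:struct6}) at least one of $B_1',B_1''$ is empty, so in fact $|B_1|\le p-1$ in the worst case (when $B_1'=\emptyset$), or else $B_1''=\emptyset$ and $B_1=B_1'$. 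The case $B_1'$ large has to be handled via Property~(\ref{prop:irred-b}): the vertices of $B_1'$ each have exactly one neighbour in $B_0$, so $|B_0|=k\ge p+2$ bounds nothing directly, but we instead look at $A_1$. By Lemma~\ref{lem:Ap}(\ref{stat:struct2}), $A_1$ is complete to $A_0$, so $|A_1|\le p-1$ (else $K_{p,p}$ with $A_0$). Then $D_1$ is constrained because by Lemma~\ref{lem:Ap}(\ref{stat:struct4}) it only sees $A_1$, and by irreducibility (Property~(\ref{prop:irred-b}) applied to white subsets, together with a counting/Hall argument as in the proof of Theorem~\ref{th:MinClasses}) a bipartite irreducible graph with one side of bounded size has the other side bounded too — more precisely, $D_1$ lies in $N(A_1)$, and since $A_1$ is small and $H$ is $K_{p,p}$-free, $|D_1|$ is bounded by a Ramsey-type or direct degree argument. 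I would similarly bound $|C|$ (its neighbours lie in $\{u\}\cup A_1$ by (\ref{stat:struct3}), a set of size $\le p$, and $K_{p,p}$-freeness bounds how many common neighbours that set can have) and $|D_2|$ (neighbours in $B_1$ only, by (\ref{stat:struct8}), with $|B_1|$ already bounded).

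The main obstacle I anticipate is the case analysis around $B_1'$ versus $B_1''$ and, in particular, bounding $|B_1'|$ and $|D_2|$ when $B_1'$ is the nonempty one: vertices of $B_1'$ each have a unique neighbour in $B_0$, so they are spread out and not obviously few. Here I expect to need the irreducibility Property~(\ref{prop:irred-b}) in an essential way, applied to the white set $B_1'$ (or $B_1'\cup\{\text{parts of }B_0\}$): if $|B_1'|$ were large, the black neighbourhood $N(B_1')\cap B$ would have to be even larger, but by Lemma~\ref{lem:Ap}(\ref{stat:struct7}) these neighbours sit only in $A_0\cup(\text{unique }B_0\text{-partners})$, and one pushes this into a $K_{p,p}$ via the completeness of $A_1$ to $A_0$ exactly as in the proof of Lemma~\ref{Ap+2_subgraph} (Equation~(\ref{eq:A1_cardinality})). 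So the symmetric counterpart of that computation, now with $u$ black, should force $|B_1'|\le $ (a linear function of $p$). Collecting the bounds $|C|,|A_1|,|B_1|,|D_1|,|D_2|$, each $O(p)$, and summing carefully, one removes at most $4p$ vertices to recover the simple tree; I would finish by noting that the connectivity remark after Lemma~\ref{lem:Ap} guarantees there are no further vertices outside $\{u\}\cup A_0\cup A_1\cup B_0\cup B_1\cup C\cup D_1\cup D_2$, so this accounts for all of $V(H)$.
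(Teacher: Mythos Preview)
Your overall strategy---invoke Lemma~\ref{Ap+2_subgraph}, apply Lemma~\ref{lem:Ap}, bound each of $C,A_1,B_1,D_1,D_2$---matches the paper's. The bounds $|A_1|<p$ and $|B_1''|<p$ from $K_{p,p}$-freeness are correct. But two of your proposed bounds do not go through, and the tool you never mention---Property~(\ref{prop:irred-a})---is exactly what the paper uses to close the gap.

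First, bounding $|C|$ by ``$K_{p,p}$-freeness bounds how many common neighbours $\{u\}\cup A_1$ can have'' does not work: there is no completeness of $C$ to $\{u\}\cup A_1$, so no $K_{p,p}$ appears however large $C$ is. The right argument is Property~(\ref{prop:irred-b}) applied to the white set $C\cup D_1$: by Statements~(\ref{stat:struct3}) and~(\ref{stat:struct4}) its black neighbourhood is contained in $\{u\}\cup A_1$, so $|C|+|D_1|<|A_1|+1\le p$.

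Second, and more seriously, your plan for the case $B_1''=\emptyset$ (so $B_1=B_1'$) does not bound $|B_1'|$. Applying Property~(\ref{prop:irred-b}) to the white set $B_1'$ only gives $|B_1'|<|N(B_1')\cap B|$, and by Statement~(\ref{stat:struct7}) that neighbourhood lies in $\{u\}\cup B_0\cup A_1$; since $|B_0|=k$ is unbounded this is useless. The analogy with Equation~(\ref{eq:A1_cardinality}) in Lemma~\ref{Ap+2_subgraph} is misleading: there $u$ was white, so $B_0\subseteq W$ and Property~(\ref{prop:irred-b}) applied to each $b_i$ forced many $B_1'$-vertices; with $u$ black, $B_0\subseteq B$ and the roles do not swap. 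What the paper does instead is observe that Statement~(\ref{stat:struct7}) makes $D_2=\emptyset$ when $B_1=B_1'$, and then use Property~(\ref{prop:irred-a}): since $W=C\cup A_0\cup B_1'\cup D_1$ and $B=\{u\}\cup B_0\cup A_1$, the equation $|W|=|B|-1$ collapses to $|C|+|B_1'|+|D_1|=|A_1|<p$. This single counting identity, which you never invoke, is what bounds $|B_1'|$ (and simultaneously $|D_2|$ in the other case).
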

\begin{proof}
As before, let $T$ denote an inclusionwise maximal induced copy of $T_k$ with $k\ge p+2$ and assume by Lemma~\ref{Ap+2_subgraph} 
that the centre $u$ of $T$ is black. The fact that $H$ is $K_{p,p}$-free together with Statement~(\ref{stat:struct2}) of Lemma~\ref{lem:Ap} implies
	that
	\begin{equation}\label{A1_cardinality}
		|A_1| < p.
	\end{equation}
Similarly, the fact that $H$ is $K_{p,p}$-free together with the definition of $B_1''$  implies
	that
 	\begin{equation}\label{B1''_cardinality}
 		|B_1''| < p.
 	\end{equation}
Statements~(\ref{stat:struct3}) and~(\ref{stat:struct4}) of Lemma~\ref{lem:Ap} together with Property~(\ref{prop:irred-b}) of irreducible graphs and inequality~(\ref{A1_cardinality}) imply that 
	\begin{equation}\label{D1_cardinality}
		|C| + |D_1| < |A_1| + 1 \leq p.
	\end{equation}
Statements (\ref{stat:struct3}), (\ref{stat:struct4}), (\ref{stat:struct5}), (\ref{stat:struct7}) and~(\ref{stat:struct8}) of Lemma~\ref{lem:Ap} imply that 
	$W = C \cup A_0 \cup B_1' \cup B_1'' \cup D_1$
	and $B = \{u\} \cup B_0 \cup A_1 \cup D_2$. Therefore, by Property~(\ref{prop:irred-a})
	of irreducible graphs 
	$$
		|C| + |B_1'| + |B_1''| + |D_1| = |A_1| + |D_2|.
	$$
According to Statement~(\ref{stat:struct6}) of Lemma~\ref{lem:Ap} there are two cases:
	\begin{enumerate}
		\item $B_1' = \emptyset$. This together with inequalities 
		(\ref{B1''_cardinality}) and (\ref{D1_cardinality})
		implies that
		$$
			|A_1| + |D_2| = |C| + |B_1''| + |D_1| < 2p,
		$$
		i.e. the graph $H$ contains less than $4p$ vertices besides the $2k+1$ vertices of $T_k$.
	
		\item $B_1'' = \emptyset$. In this case, 
		Statement~(\ref{stat:struct7}) of Lemma~\ref{lem:Ap} implies that $D_2$ is also empty 
		and taking into account inequality (\ref{A1_cardinality}) we have 
		$$
			|C| + |B_1'| + |D_1| = |A_1| < p,
		$$
		i.e. the graph $H$ contains less than $2p$ vertices besides the $2k+1$ vertices of $T_k$.
	\end{enumerate}
\end{proof}

\begin{theorem}\label{th:MinAugStruct}
Let $p \in \mathbb{N}$ and let $H=(W,B,E)$ be an irreducible $(S_{1,1,3},K_{p,p})$-free graph.
Then $H$ is either
\begin{itemize}
\item an induced path of even length or
\item a $4p$-extension of a simple tree $T_k$ with $k\ge p+2$ or	
\item a member of the finite set of $(P_8,T_{p+2},K_{p,p})$-free irreducible graphs.
\end{itemize}
\end{theorem}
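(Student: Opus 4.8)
The plan is to combine the three structural lemmas already proved in this section. The argument is essentially a case analysis on how ``large'' the irreducible graph $H$ is, measured by whether it contains a long induced path or a large induced simple tree.

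First I would invoke Theorem~\ref{th:MinClasses} in the following form: apply it with the hereditary class ${\cal C}$ being the class of $(S_{1,1,3},K_{p,p})$-free graphs. Since $S_{1,1,3}$ has an induced $P_5$ but $T_{p+2}$ does not contain an induced $S_{1,1,3}$ for $p\ge 1$, and since $K_{p,p}$ is excluded outright, one checks that ${\cal C}$ contains all of $\cal T$ but only finitely many members of $\cal P$ and of $\cal K$ — hence the only way ${\cal C}^i$ can be infinite is through $\cal T$. More concretely, for a fixed threshold (say $t = 8$ for paths and $t = p+2$ for simple trees), the class of $(S_{1,1,3},K_{p,p},P_8,T_{p+2})$-free graphs has only finitely many irreducible members; this is exactly the third bullet of the statement. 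So it remains to handle an irreducible $H$ that contains either an induced $P_8$ or an induced $T_{p+2}$.

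Next, suppose $H$ contains an induced $P_8$. Being irreducible, $H$ is a connected $S_{1,1,3}$-free bipartite graph, so Lemma~\ref{lem:AugGraphWithLongPath} applies and tells us $H$ is a chordless path or a chordless cycle. A chordless cycle has equally many black and white vertices, contradicting Property~(\ref{prop:irred-a}); a chordless path has one more vertex in one part than the other precisely when its length is even, and again Property~(\ref{prop:irred-a}) forces the length to be even. Hence $H$ is an induced path of even length — the first bullet. Finally, if $H$ contains no induced $P_8$ but does contain an induced $T_{p+2}$, then $H$ satisfies the hypotheses of Lemma~\ref{lem:MinAugGraphWithAp+2}, which gives directly that $H$ is a $4p$-extension of a simple tree $T_k$ with $k\ge p+2$ — the second bullet. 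If $H$ contains neither an induced $P_8$ nor an induced $T_{p+2}$, it falls into the finite family, the third bullet. Since these cases are exhaustive, the theorem follows.

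I do not expect any real obstacle here: all the technical work is already encapsulated in Lemmas~\ref{lem:AugGraphWithLongPath} and~\ref{lem:MinAugGraphWithAp+2} and in Theorem~\ref{th:MinClasses}. The only point requiring a little care is the bookkeeping in the first step — verifying that the finiteness conclusion of Theorem~\ref{th:MinClasses} is legitimately available for the class $(S_{1,1,3},K_{p,p})$-free graphs once we additionally forbid $P_8$ and $T_{p+2}$. This amounts to noting that $P_8 \in {\cal P}$, $T_{p+2} \in {\cal T}$, and some member of $\cal K$ is already excluded by $K_{p,p}$-freeness, so the hypothesis ``${\cal C}^i$ contains none of $\cal P, \cal K, \cal T$ in full'' of (the contrapositive of) Theorem~\ref{th:MinClasses} is met, forcing ${\cal C}^i$ restricted to $(P_8,T_{p+2})$-free graphs to be finite. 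One should also double-check the trivial observation that an even chordless cycle is not irreducible and that a chordless path is irreducible exactly when its length is even, both of which are immediate from Property~(\ref{prop:irred-a}).
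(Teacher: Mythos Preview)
Your proposal is correct and follows essentially the same three-case argument as the paper: invoke Lemma~\ref{lem:AugGraphWithLongPath} when $H$ contains an induced $P_8$, Lemma~\ref{lem:MinAugGraphWithAp+2} when $H$ contains an induced $T_{p+2}$, and Theorem~\ref{th:MinClasses} otherwise. You add a bit more detail than the paper does---explicitly ruling out the chordless-cycle outcome of Lemma~\ref{lem:AugGraphWithLongPath} via Property~(\ref{prop:irred-a}), and spelling out why the contrapositive of Theorem~\ref{th:MinClasses} applies to the $(P_8,T_{p+2},K_{p,p})$-free class---but the structure is identical.
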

\begin{proof}
If $H$ contains an induced $P_8$, then by Lemma~\ref{lem:AugGraphWithLongPath} the graph $H$ is an induced path of even length.
If $H$ contains an induced copy of $T_{p+2}$, then by Lemma~\ref{lem:MinAugGraphWithAp+2} the graph $H$ is a $4p$-extension of a simple tree $T_k$ with $k\ge p+2$. 
If $H$ contains neither $P_8$ nor $T_{p+2}$, then it belongs to a finite collection of irreducible graphs by Theorem~\ref{th:MinClasses}.
\end{proof}

\subsection{Finding augmenting $(S_{1,1,3},K_{p,p})$-free graphs}
\label{sec:alg}
In this section we deal with the problem of finding augmenting graphs in $(S_{1,1,3},K_{p,p})$-free graphs.
According to Theorem~\ref{th:MinAugStruct}, this problem consists of two main subproblems:
finding augmenting paths and finding extensions of simple trees. 
The first of these was solved in \cite{GHL2006} even for more general graphs, namely for $S_{1,2,3}$-free graphs. 
In Lemma~\ref{lem:FindExtOfSimpleTree} we solve the second subproblem. Then in Theorem~\ref{thm:FindMinAugGraph}
we summarize our arguments and present a polynomial-time solution to the {\sc maximum independent set}
problem in the class of $(S_{1,1,3},K_{p,p})$-free graphs.

\begin{lemma}
\label{lem:FindExtOfSimpleTree}
	Let $p \geq 2$, $G=(V,E)$ be an $(S_{1,1,3}, K_{p,p})$-free graph and $S \subseteq V$
	be an independent set in $G$. 
	Then in polynomial time one can determine whether $G$ contains an irreducible augmenting graph for $S$ 
	which is a $4p$-extension of a simple tree $T_k$ with $k\ge p+2$.
\end{lemma}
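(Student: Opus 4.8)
The plan is to combine the structural description of Lemma~\ref{lem:MinAugGraphWithAp+2} with an exhaustive search over a bounded part of the sought graph, leaving the unbounded part to be found by matching. By Lemmas~\ref{lem:Ap}, \ref{Ap+2_subgraph} and~\ref{lem:MinAugGraphWithAp+2}, every irreducible augmenting graph $H$ for $S$ (with parts $W\subseteq S$ and $B\subseteq V\setminus S$) that is a $4p$-extension of $T_k$ with $k\ge p+2$ has the following shape: a black centre $u\notin S$; an \emph{extension set} $F=A_1\cup B_1'\cup B_1''\cup C\cup D_1\cup D_2$ of fewer than $4p$ vertices whose adjacencies are governed by Statements~(\ref{stat:struct1})--(\ref{stat:struct8}) of Lemma~\ref{lem:Ap}; and $k$ pendant paths, the $i$th on vertices $u,a_i,b_i$, with $A_0=\{a_1,\dots,a_k\}\subseteq S\cap N(u)$ and $B_0=\{b_1,\dots,b_k\}\subseteq V\setminus S$. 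Accordingly, the first step is to enumerate, in $n^{O(p)}$ time, all triples consisting of a candidate centre $u$, a set $F$ of at most $4p-1$ vertices, and an assignment of the vertices of $F$ to the roles $A_1,B_1',B_1'',C,D_1,D_2$, discarding every triple inconsistent with Lemma~\ref{lem:Ap} (in particular with Statement~(\ref{stat:struct6})) or with an irreducibility/augmenting requirement that does not mention $A_0\cup B_0$ — namely the cardinality identity from the proof of Lemma~\ref{lem:MinAugGraphWithAp+2} (which, as that proof shows, is independent of $k$), connectedness, and $N(v)\cap S\subseteq W$ for the black vertices $v$ of $F$. For each surviving triple we then decide whether it extends to a full $H$ by attaching at least $p+2$ pendant paths; the lemma's answer is ``yes'' exactly when some triple passes, and verifying that a passing triple yields a genuine irreducible augmenting graph is routine.

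Fix a surviving triple. Call $a\in S\setminus F$ a \emph{feasible centre} if $a\sim u$, $a$ is adjacent to every vertex of $A_1$, and $a$ is non-adjacent to every vertex of $F$ whose adjacency to $a$ would clash with bipartiteness or with the role assignment. Call $b\in V\setminus(S\cup F)$ a \emph{feasible leaf} if $b\ne u$, $b\not\sim u$, $b$ is adjacent to every vertex of $B_1''$ and to no vertex of $A_1\cup C\cup D_1\cup D_2$, and $b$ has exactly one white neighbour outside $B_1'\cup B_1''$, which moreover is a feasible centre; write $c(b)$ for that centre. Statements~(\ref{stat:struct3})--(\ref{stat:struct8}) of Lemma~\ref{lem:Ap} guarantee that every leaf $b_i$ of a genuine $H$ is feasible and that $c(b_i)=a_i$. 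The crucial gain is that the centre $c(b)$ of a feasible leaf is \emph{uniquely determined} by $b$; hence $a_i\not\sim b_j$ for $i\ne j$ holds automatically (the white neighbours of $b_j$ are $a_j$ together with vertices of $B_1'\cup B_1''\subseteq F$, and $a_i$ is none of these), so we need not search for an induced matching between $A_0$ and $B_0$, and feasible leaves with distinct centres are automatically distinct. After one more round of $n^{O(p)}$ guessing — for each of the fewer than $p$ vertices $w\in B_1'$, which feasible leaf $b_i$ it is attached to (recall $w$ has exactly one neighbour in $B_0$) — the centres of those leaves become fixed, and what is left is to choose, for at least $p+2$ further feasible centres, one feasible leaf each that avoids $B_1'$ and the already fixed leaves.

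The only surviving global constraint is that all the chosen leaves be pairwise non-adjacent, so that $B_0$ is independent; this is the step I expect to be the main obstacle, since choosing a large independent set of leaves that in addition meets at least $p+2$ distinct centres is, a priori, a hard ``rainbow independent set'' problem. I would attack it through the $(S_{1,1,3},K_{p,p})$-free structure of the subgraph induced by the feasible leaves: the pools $L(a)=\{b:c(b)=a\}$ of distinct feasible centres $a$ are pairwise disjoint, two pools cannot be ``too complete'' to one another without producing a forbidden $K_{p,p}$ (after restricting to independent subsets), and $S_{1,1,3}$-freeness controls how adjacencies within and across pools can interact — which should reduce the required selection of $p+2$ pairwise non-adjacent leaves to a bipartite-matching (or greedy) computation, losing at most boundedly many centres. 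Finally, since every vertex of $B_0$ is private to its centre, Property~(\ref{prop:irred-b}) only has to be checked for subsets of $W$ whose part outside $F$ is bounded, so the concluding verification is polynomial as well. Altogether the procedure runs in $n^{O(p)}$ time, which is polynomial for fixed $p$, and the $(S_{1,1,3},K_{p,p})$-free hypotheses are used most essentially in this last step.
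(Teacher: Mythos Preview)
Your overall plan---guess the centre $u$ and the bounded ``extension'' part, then fill in the pendant paths---is exactly the paper's strategy. The genuine gap is the step you flag as ``the main obstacle'': you treat the pairwise non-adjacency of the chosen leaves $b_1,\dots,b_k$ as a hard rainbow-independent-set problem and only offer a vague plan (``should reduce \dots\ to a bipartite-matching (or greedy) computation''). In fact this step is free. Once $u$ and the white extension $Q_1$ are fixed, the augmenting condition $N(u)\cap S\subseteq W$ forces $A_0=N_S(u)\setminus Q_1$, so there is no selection among centres at all; and for each $a_i$ you may pick \emph{any} $b_i\in L_i=\{v\in R\setminus(Q_2\cup\{u\}): N_{S\setminus Q_1}(v)=\{a_i\}\text{ and }v\not\sim u,\ v\text{ non-adjacent to }Q_2\}$. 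If two such leaves $b_i,b_j$ were adjacent, then taking any two further indices $l,m$ (available since $k\ge p+2\ge4$) the set $\{u,a_l,a_m,a_i,b_i,b_j\}$ induces an $S_{1,1,3}$: $u$ is the degree-$3$ vertex with short arms $a_l,a_m$ and long arm $a_i,b_i,b_j$, and all the required non-adjacencies hold because $b_i,b_j$ each have a unique neighbour in $S\setminus Q_1$. So $B_0$ is automatically independent, and the whole algorithm is: enumerate triples $(Q_1,Q_2,u)$ with $|Q_1|=|Q_2|\le 2p$, set $A_0=N_S(u)\setminus Q_1$, check $|A_0|\ge p+2$ and $N_S(Q_2)\subseteq A_0\cup Q_1$, compute the $L_i$'s, and accept iff all are nonempty.

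Two consequences worth noting. First, your auxiliary machinery---the role decomposition of $F$ into $A_1,B_1',B_1'',C,D_1,D_2$, the extra $n^{O(p)}$ guessing of which leaf each $w\in B_1'$ attaches to, and the requirement that feasible leaves be adjacent to every vertex of $B_1''$---is unnecessary: the paper never asks the output to match the decomposition of Lemma~\ref{lem:Ap}, only to be an augmenting $4p$-extension of some $T_k$. Second, your phrasing ``choose, for at least $p+2$ further feasible centres, one feasible leaf each'' is dangerous: if you take $A_0$ to be a proper subset of $N_S(u)\setminus Q_1$, then $u$ has a white neighbour in $S\setminus W$ and the returned graph is not augmenting. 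The fix, as above, is that $A_0$ is forced to be all of $N_S(u)\setminus Q_1$.
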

\begin{proof}
Suppose that $G$ contains an irreducible augmenting graph $H=(W,B,E')$ for $S$ which is
a $4p$-extension of a simple tree $T_k$ with $k\ge p+2$. As before, we denote the centre 
of $T_k$ by $u$ and by Lemma~\ref{lem:MinAugGraphWithAp+2} we may assume it is black. Also, let $A_0$ and $B_0$ denote the sets of white and black non-centre vertices 
of $T_k$, respectively. Finally, let $Q_1$ denote the set of additional white vertices of $H$
and let $Q_2$ denote the set of additional black vertices of $H$. Since $H$ is irreducible, it follows that $|Q_1|=|Q_2|$.

	\begin{figure}[ht!]
		\centering
\begin{tikzpicture}
	  		[scale=.6,auto=left]
			\node[b_vertex] (u) at (0,0) { }; 
			\coordinate [label=center:\footnotesize{$u$}] (u_) at (0,-0.45);

			\node[w_vertex] (b1) at (3,4) { };
			\coordinate [label=center:\footnotesize{$a_1$}] (a_1) at (3.1,4.4);
			\node[w_vertex] (b2) at (4,4) { };
			\coordinate [label=center:\footnotesize{$a_2$}] (a_2) at (4.1,4.4);
			\node[point] (px1) at (4.66, 4) {};
			\node[point] (px2) at (5, 4) {};
                     \node[point] (px3) at (5.33, 4) {};
			\node[w_vertex] (bs) at (6,4) { };
			\coordinate [label=center:\footnotesize{$a_s$}] (a_s) at (6.1,4.4);
			\draw [gray, decorate, decoration={brace, amplitude=8}] (2.7,4.7)  --  (6.3,4.7);
			\coordinate [label=center:$A_0$] (A0) at (4.5,5.5);
			\node[b_vertex] (a1) at (3,0)   { };
			\coordinate [label=center:\footnotesize{$b_1$}] (b_1) at (3.1,-0.45);
			\node[b_vertex] (a2) at (4,0)   { };
			\coordinate [label=center:\footnotesize{$b_2$}] (b_2) at (4.1,-0.45);
			\node[point] (py1) at (4.66, 0) {};
			\node[point] (py2) at (5, 0) {};
                                \node[point] (py3) at (5.33, 0) {};
			\node[b_vertex] (as) at (6,0)   { };
			\coordinate [label=center:\footnotesize{$b_s$}] (b_s) at (6.1,-0.45);
			\draw [gray, decorate, decoration={brace, amplitude=8}] (6.3,-0.7)  --  (2.7,-0.7);
			\coordinate [label=center:$B_0$] (B0) at (4.5,-1.5);

			\foreach \from/\to in {b1/a1,b2/a2,bs/as}
	    		\draw (\from) -- (\to);
	    		
	    		\foreach \from/\to in {u/b1,u/b2,u/bs}
	    		\draw (\from) -- (\to);

			\draw (11,4) ellipse [x radius = 3, y radius = 1]; 
			\coordinate [label=center:$Q_1$] (R1) at (11,5.5);
			
			\draw (11,0) ellipse [x radius = 3, y radius = 1];
			\coordinate [label=center:$Q_2$] (R2) at (11,-1.5);				
			
		\end{tikzpicture}
		\label{min_simple_aug_ext_tree_fig}
	\end{figure}

In order to determine whether $G$ contains an augmenting graph $H$ satisfying the above properties, we 
successively consider all triples $(Q_1,Q_2,u)$ such that 
\begin{itemize}
\item $Q_1 \subseteq S$, 
\item $Q_2 \subseteq R = V \setminus S$, 
\item $|Q_1| = |Q_2| \leq 2p$, 
\item $Q_2$ is an independent set and
\item $u$ is a vertex in  $R \setminus Q_2$ with $N(u) \cap Q_2 = \emptyset$. 
\end{itemize}
For each such triple, 
we try to build a copy of $T_k$ centred at $u$. Note that the choice of
	$u$ and $Q_1$ uniquely defines the white part of $T_k$. Namely, 
	$A_0 = N_S(u) \setminus Q_1 = \{ a_1, a_2, \ldots, a_k \}$. 
If $k<p+2$ or $N_S(Q_2) \not\subseteq A_0 \cup Q_1$, then clearly the triple $(Q_1,Q_2,u)$
does not belong to any augmenting graph $H$ satisfying all the properties stated at the beginning of the proof,
in which case we eliminate this triple from further consideration and move to the next one. Otherwise, we check whether
	there is a set of black vertices $B_0 = \{ b_1, b_2, \ldots, b_k \}$ such that:
	\begin{itemize}
		\item $\{u\} \cup B_0 \cup Q_2$ is an independent set;
		\item for $i=1, \ldots, k$, the only white neighbour of $b_i$
		in $S \setminus Q_1$ is $a_i$.
	\end{itemize}
	
	\noindent
	To this end, we consider the following sets for $i=1,\ldots,k$:
	$$
	L_i = \{ v \in R \setminus (Q_2 \cup \{u\}) \ |\  
	N_{S \setminus Q_1} (v)  = \{a_i\} {\text{ and }} 
	v {\text{ has no neighbours in }} \{u\} \cup Q_2 \}.
	$$
If at least one of these sets is empty, then again the triple $(Q_1,Q_2,u)$
is not part of any augmenting graph $H$ we are looking for, and hence we eliminate this triple. 
	Otherwise, for each $i = 1, \ldots, k$ we select any vertex from $L_i$
	as $b_i$ and return the graph $G[Q_1 \cup Q_2 \cup \{u\} \cup A_0 \cup B_0]$.
	It remains to show that $B_0$ is an independent set. Assume for a contradiction that 
$b_i$ is adjacent to $b_j$ for two distinct indices $i,j \in \{1, \ldots, k\}$ and let 
$l,m \in \{1, \ldots, k\}$ be two distinct indices different from $i,j$. Then the set
 $\{ u,a_l, a_m, a_i, b_i, b_j \}$ induces an $S_{1,1,3}$. This contradiction shows that $B_0$ 
 is an independent set and hence $\{u\}\cup A_0\cup B_0 \cup Q_1\cup Q_2$ induces an augmenting 
graph $H$ for $S$ which is a $4p$-extension of a simple tree $T_k$ with $k\ge p+2$. If all triples 
have been examined and eliminated, then no such $H$ exists. 
	
In order to show that the above procedure is polynomial in $n=|V(G)|$, we 
observe that there are $O(n^{4p+1})$ triples $(Q_1,Q_2,u)$
such that $|Q_1|=|Q_2|\le 2p$. Also, it is obvious that for each triple the sets $A_0, L_i$ $(i=1,\ldots,k)$ can be 
constructed in polynomial time. Therefore, for a fixed $p$, the above procedure for detecting 
$4p$-extensions of simple trees takes polynomial time.
\end{proof}

\begin{theorem}
\label{thm:FindMinAugGraph}
For any $p \in \mathbb{N}$, the {\sc maximum independent set} problem can be solved for $(S_{1,1,3},K_{p,p})$-free graphs in polynomial time.
\end{theorem}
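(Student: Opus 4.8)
The plan is to assemble a polynomial-time algorithm for \textsc{maximum independent set} on $(S_{1,1,3},K_{p,p})$-free graphs by combining the structural dichotomy of Theorem~\ref{th:MinAugStruct} with an algorithmic routine for detecting each of the three kinds of irreducible augmenting graphs. Recall the augmenting-graph paradigm from Section~\ref{sec:aug}: we start with an arbitrary independent set $S$ (say, a single vertex, or a maximal independent set found greedily), and we repeatedly look for an irreducible augmenting graph $H$ for $S$; if one is found, we perform $H$-augmentation to obtain a strictly larger independent set, and if none is found, the Augmenting Graph Theorem certifies that $S$ is maximum. Since each augmentation increases $|S|$ by at least one, at most $n = |V(G)|$ rounds are needed, so it suffices to show that one round --- deciding whether an irreducible augmenting graph for a given $S$ exists, and producing one --- can be done in polynomial time.

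First I would invoke Theorem~\ref{th:MinAugStruct}: every irreducible $(S_{1,1,3},K_{p,p})$-free graph is (a) an induced path of even length, or (b) a $4p$-extension of a simple tree $T_k$ with $k\ge p+2$, or (c) one of finitely many $(P_8,T_{p+2},K_{p,p})$-free irreducible graphs. Correspondingly, in a given round I would run three searches. For (c), since the list of possible graphs is finite (its size depends only on $p$), each has a bounded number of vertices, so we can check for an augmenting subgraph of each such isomorphism type by brute force over vertex subsets of bounded size --- polynomial in $n$ for fixed $p$. For (b), Lemma~\ref{lem:FindExtOfSimpleTree} (applicable since we may assume $p\ge 2$; the case $p=1$ means $K_{1,1}$-free, i.e.\ edgeless, which is trivial, or can be folded into claw-free handling) gives exactly the required polynomial-time test for the existence of an irreducible augmenting graph that is a $4p$-extension of a simple tree $T_k$, $k\ge p+2$. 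For (a), finding an augmenting induced path: here I would cite \cite{GHL2006}, where this is solved in polynomial time even in the more general class of $S_{1,2,3}$-free graphs (and $(S_{1,1,3},K_{p,p})$-free graphs are $S_{1,2,3}$-free only if... --- more carefully, one needs that the \emph{host} graph's augmenting paths can be detected; the cited result detects augmenting chains in $S_{1,2,3}$-free graphs, and since $S_{1,1,3}$ is an induced subgraph of $S_{1,2,3}$, our class is contained in the $S_{1,2,3}$-free class, so the algorithm of \cite{GHL2006} applies directly).

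Putting these together: in each round we run all three subroutines; each is polynomial in $n$ for fixed $p$. If any of them returns an augmenting graph $H$, we augment $S$ along $H$ and proceed to the next round; if all three report failure, then by Theorem~\ref{th:MinAugStruct} there is no irreducible augmenting graph for $S$, hence (since every augmenting graph contains an irreducible one, as noted after the definition of irreducibility) no augmenting graph at all, so $S$ is maximum by the Augmenting Graph Theorem, and we output $S$. The total running time is $n$ rounds times the per-round polynomial cost, which is polynomial in $n$ for every fixed $p$. The main obstacle, and the place where care is needed, is the bookkeeping for case (a): one must make sure that the algorithm of \cite{GHL2006} for augmenting chains in $S_{1,2,3}$-free graphs indeed suffices --- that the relevant notion of augmenting chain there coincides with ``irreducible augmenting graph that is an even induced path'' here, and that the class inclusion $(S_{1,1,3},K_{p,p})\text{-free} \subseteq S_{1,2,3}\text{-free}$ is used correctly. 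The other steps are essentially mechanical given the lemmas already proved.
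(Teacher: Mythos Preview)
Your proposal is correct and follows essentially the same approach as the paper: invoke Theorem~\ref{th:MinAugStruct} to split into three cases, handle augmenting paths via \cite{GHL2006} (using the inclusion of $S_{1,1,3}$-free graphs in $S_{1,2,3}$-free graphs, which you justify correctly), handle $4p$-extensions of simple trees via Lemma~\ref{lem:FindExtOfSimpleTree}, handle the finite residual class by brute force, and iterate at most $n$ times. Your hesitation about case~(a) is unnecessary --- the class inclusion argument you give is exactly the right one, and the paper uses it without further comment.
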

\begin{proof}
Let $G$ be an $(S_{1,1,3},K_{p,p})$-free graph and $S$ an {\it arbitrary} independent set in $G$.
If $G$ contains an augmenting path for $S$, such a path can be found by an algorithm proposed in \cite{GHL2006}, 
which works in polynomial time for any graph containing no induced $S_{1,2,3}$.

If $G$ contains a $4p$-extension of a simple tree, such an extension can be found in polynomial time
by Lemma~\ref{lem:FindExtOfSimpleTree}. 

If $G$ contains neither an augmenting path nor an extension of a simple tree for $S$,
then by Theorem~\ref{th:MinAugStruct}, the set $S$ is not maximum if and only if it admits an augmenting graph 
which is $(P_8,T_{p+2},K_{p,p})$-free. By Theorem~\ref{th:MinClasses} there are only finitely many irreducible 
graphs in this set and hence detecting such graphs can be done in polynomial time.  
 
Thus in polynomial time one can determine whether $G$ contains an augmenting graph for $S$. 
Since an augmentation can be applied at most $|V(G)|$ times, we conclude that 
the overall time complexity of finding a maximum independent set in $G$ is polynomial. 
\end{proof}

\section{Conclusion}
\label{sec:con}

In this paper, we proved two main results. First, we identified three minimal infinite classes of augmenting graphs,
and second, we showed that the {\sc maximum independent set} problem restricted to the class of $(S_{1,1,3},K_{p,p})$-free graphs
can be solved in polynomial time. We purposely avoided providing any specific time bound for our solution, because the most 
expensive part of our algorithm deals with finding augmenting graphs from a finite collection of $(P_8,T_{p+2},K_{p,p})$-free graphs.
Estimating the size of a largest graph in this collection involves Ramsey numbers and hence any time bound based on this estimation
is of only theoretical interest. Finding stronger bounds leading to more efficient algorithms for $(S_{1,1,3},K_{p,p})$-free graphs 
is an interesting open problem. 

To state one more open problem, let us observe that our result for $(S_{1,1,3},K_{p,p})$-free graphs  generalizes 
the polynomial-time solution to the problem in the class of claw-free graphs (for each $p\ge 3$). 
This observation and the fact that the problem can be solved for {\it weighted} claw-free graphs \cite{NT01}
raises the following question: is it possible to extend polynomial-time solvability of the problem to {\it weighted} $(S_{1,1,3},K_{p,p})$-free graphs?
We leave this question as an open problem for future research.     

\section*{Acknowledgments}
Research of Konrad Dabrowski was supported by Agence Nationale de la Recherche award ANR-09-EMER-010 and Engineering and Physical Sciences Research Council (EPSRC) award EP/K025090/1.
Vadim Lozin and Viktor Zamaraev acknowledge support from EPSRC grant EP/L020408/1.
Konrad Dabrowski and Vadim Lozin were also supported by the Centre for Discrete Mathematics and its Applications (DIMAP), which was partially funded by EPSRC award EP/D063191/1.
Viktor Zamaraev was partially supported by 
Russian Federation Government grant No.~11.G34.31.0057.
This research was partly carried out while Vadim Lozin was visiting Dominique de Werra at EPFL. The support of EPFL is 
gratefully acknowledged.

\end{document}